\newcommand{\dV}{d\mathrm{Vol}}
\newcommand{\be}{\begin{equation}}
	\newcommand{\ee}{\end{equation}}
\newcommand{\bea}{\begin{eqnarray}}
	\newcommand{\eea}{\end{eqnarray}}
\def\bml{\begin{subequations}}
	\def\blea{\bml\begin{eqnarray}}
		\def\eml{\end{subequations}}
	\def\elea{\end{eqnarray}\eml}
\newtheorem{theorem}{Theorem}[section]
\newtheorem{lemma}[theorem]{Lemma}
\newtheorem{proposition}[theorem]{Proposition}
\newtheorem{definition}[theorem]{Definition}
\def\fmax{\phi_{\text{max}}}
\newcommand{\nord}[1]{{:}#1{:}}
\newcommand{\coin}[1]{\left[\!\left[#1\right]\!\right]}
\DeclareMathOperator{\supp}{supp}
\newcommand{\pFq}[5]{{}_{#1}F_{#2}\left(\begin{array}{c} #3 \\ #4 \end{array}; #5\right)}
\newcommand{\KdF}[9]{F^{#1}_{#2}\left(\begin{array}{c} #3: #4 \mathbin{;} #5 \\ #6: #7\mathbin{;} #8 \end{array}; #9\right)}
\begin{document}
	\title{A semiclassical singularity theorem}
	\author{Christopher J. Fewster}
	\affil{Department of Mathematics, University of York, Heslington, York YO10 5DD, United Kingdom}
	\author{Eleni-Alexandra Kontou}
	\affil{Department of Mathematics, University of York, Heslington, York YO10 5DD, United Kingdom}
	\affil{Department of Physics, College of the Holy Cross, Worcester, Massachusetts 01610, USA}
	\affil{ ITFA and GRAPPA, Universiteit van Amsterdam,
		Science Park 904, Amsterdam, the Netherlands}
	
	\maketitle

	\begin{abstract}
		Quantum fields do not satisfy the pointwise energy conditions that are assumed in the original singularity theorems of Penrose and Hawking. Accordingly, semiclassical quantum gravity lies outside their scope. Although a number of singularity theorems have been derived under weakened energy conditions, none is directly derived from quantum field theory. Here, we employ a quantum energy inequality satisfied by the quantized minimally coupled linear scalar field to derive a singularity theorem valid in semiclassical gravity. By considering a toy cosmological model, we show that our result predicts timelike geodesic incompleteness on plausible timescales with reasonable conditions at a spacelike Cauchy surface. 
	\end{abstract}

\section{Introduction}
\label{sec:introduction}

Singularity theorems represent an effort to answer a central question in gravitational physics: under which conditions do cosmological or astrophysical systems originate or end in a singularity? In effect: are singularities inevitable in our universe? The famous   singularity theorems of Penrose~\cite{Penrose:1964wq} and Hawking~\cite{Hawking:1966sx} address this question in general relativity by taking geodesic incompleteness as the defining characteristic of a singularity, and showing that this occurs, from suitable starting situations, for any matter model obeying energy conditions broadly expressing the idea of local energy positivity.
In particular, Pen{\-}rose's theorem on null geodesic incompleteness was based on the
null energy condition (NEC), that the stress-energy tensor should obey $T_{\mu \nu} \ell^\mu \ell^\nu \geq 0$ for all null vectors $\ell^\mu$, while Hawking's result on timelike geodesic incompleteness required the strong energy condition (SEC), that
\be
\label{eqn:SEC}
\rho_U := T_{\mu \nu} U^\mu U^\nu-\frac{T^{\mu}_{\phantom{\mu}\mu}}{n-2}  \geq 0  \qquad \text{for all unit timelike vectors $U$,}
\ee
where $n$ is the spacetime dimension. Here, $\rho_U$ is the \emph{effective energy density} (EED) seen by an observer with velocity $U^\mu$ (cf.\ Ref.~\cite{Pirani:1956tn}, which refers to
the ``effective density of gravitational mass''). Via the Einstein equation, the SEC is equivalent to the geometrical condition $R_{\mu\nu}U^\mu U^\nu \leq 0$ on the Ricci tensor (our conventions are listed at the end of the introduction).

It has long been recognized that the energy conditions are too restrictive. 
Even the classical massive minimally coupled scalar field violates the SEC and the non-minimally coupled scalar field violates the NEC, while quantum fields violate all pointwise energy conditions \cite{Epstein:1965zza}. Reviews concern{\-}ing the status of energy conditions include Refs.~\cite{Curiel:2014zba, Kontou:2020bta}, while  Ref.~\cite{Senovilla:2018aav} is a comprehensive review of all aspects of singularity theorems. Accordingly, several authors have focused on relaxing the energy condition in singularity theorems looking instead at weaker, averaged energy conditions \cite{Tipler:1978zz,Borde:1987qr, Roman:1988vv,Wald:1991xn}. Most of that work focuses on Penrose-type (null geodesic incompleteness) results and the half-averaged null energy condition. More recently, Ref.~\cite{Fewster:2010gm} presented proofs of both Hawking-type (timelike geodesic incompleteness) and Penrose-type singularity theorems with weakened energy conditions. Based on their results, Ref.~\cite{Brown:2018hym} proved a Hawking-type singularity theorem for the classical Einstein-Klein-Gordon field. The essential technical ingredient in this work, as with almost all of the previous literature concerning weakened energy conditions, is the analysis of a 
Riccati differential inequality derived from the Raychaudhuri equations~\cite{Ehlers:1993gf}.
However, it has been realised very recently~\cite{Fewster:2019bjg} that \emph{index form} methods provide a much simpler route to obtaining singularity theorems with weakened energy hypotheses, with the advantage that they also provide quantitative estimates of the timescale on which the singularity occurs. Index form methods appear in some textbook accounts of the original singularity theorems~\cite{ONeill,Kriele} and are also employed in generalizations of the 
singularity theorems in other directions -- see e.g.,~\cite{GallowaySenovilla:2010,AazamiJavaloyes}. An interesting account of the
interplay between Riccati and index-form methods in a slightly different context can be found in Ref.~\cite{EhrlichKim:2009}.
 
But what about quantum fields? Reconciliation of the energy conditions and quantum field theory began with the seminal works of Ford~\cite{Ford:1978qya,Ford:1990id}. Motivated by an apparent conflict between negative energy densities and the second law of thermodynamics, he
suggested and later proved that quantum field theory contains mechanisms that impose
fundamental restrictions on the possible magnitude and duration of any negative energy densities or fluxes due to the quantum field; these restrictions -- \emph{quantum energy inequalities} (QEIs) -- have been proved to hold for a range of different quantum field theories in flat and curved spacetimes. See~\cite{Fewster2017QEIs,Kontou:2020bta} for recent reviews.

An important question is whether QEIs can be used as hypotheses for singularity theorems and this has been addressed to some extent. Both Refs.~\cite{Fewster:2010gm} and \cite{Fewster:2019bjg} introduced singularity theorems with hypotheses inspired by QEIs. For example, Ref.~\cite{Fewster:2019bjg} proved a Hawking-type singularity theorem in spacetime dimension $n=2m$ ($m\in\mathbb{N}$) under the following geometrical condition on the Ricci tensor: for any unit-speed timelike geodesic $\gamma:I\to M$, where $I$ is a compact interval in $\mathbb{R}$, it is assumed that there are non-negative constants 
$Q_m(\gamma)$ and $Q_0(\gamma)$ of appropriate dimensions so that  
the inequality
\be
\label{eqn:Riccigen}
\int_I  f(t)^2 R_{\mu\nu} \dot{\gamma}^\mu \dot{\gamma}^\nu|_{\gamma(t)} \,dt\le  Q_m(\gamma) \|f^{(m)}\|^2 + Q_0(\gamma) \|f\|^2\,,
\ee
holds for all smooth real-valued $f$ supported in the interior of $I$. Here $\|\cdot\|$ denotes the standard norm of $L^2(I)$. Under this condition, and additional assumptions on the extrinsic curvature of a spacelike Cauchy surface, it was shown that timelike goedesic completeness must fail. Quantitative estimates of the required initial contraction and timescales for the singularity were obtained in various cases, 
including models drawn from cosmology.

However, none of the conditions used in the works just mentioned were derived directly from quantum field theories. Indeed, QEIs on the quantised EED -- needed for Hawking-type results -- were only derived recently~\cite{Fewster:2018pey}. What we will do in this paper is to use those
QEIs as the basis for a singularity theorem valid in semiclassical quantum gravity. There remains a slight gap. The bounds obtained in~\cite{Fewster:2018pey} hold in flat and curved spacetime backgrounds but in the latter case require a reference state to be chosen.
(In Minkowski spacetime we simply adopt the Poincar\'e invariant vacuum state as the reference.) As we will argue, however, the contribution from the reference state can be neglected on sufficiently small scales, on which the bound is dominated by terms with a universal form (essentially equal to the bound in Minkowski spacetime). The upshot is that we 
assume the bound~\eqref{eqn:Riccigen} for solutions to semiclassical Einstein--Klein--Gordon system, with
constants $Q_m$ and $Q_0$ obtained from the QEI and other assumptions, but restricting to functions $f$ whose support extends over sufficiently short intervals of proper time. This timescale is expected to shrink as a singularity is approached. Ideally, it would be derived from the QEI itself -- a line of work that
will be pursued elsewhere -- but in the present paper it must be supplied as an additional piece of information. Thus we will speak of a geodesic being $T$-regular, where $T:I\to \mathbb{R}$, if the inequality~\eqref{eqn:Riccigen} holds for functions $f$ supported in intervals of width $T_t$ centered at $t$, allowing $t$ to vary within $I$ (see Definition~\ref{def:Treg} for a precise statement). 

One approach (which had been our initial plan) is only to assume \eqref{eqn:Riccigen} for $f$ with support duration below a fixed timescale determined by the largest curvature scales encountered along the geodesic $\gamma$. This corresponds to $T$-regularity for a constant function $T$. In principle a direct application of the results in \cite{Fewster:2019bjg} would now give a singularity theorem predicting geodesic incompleteness with a side assumption that the curvature remain below some threshold chosen sufficiently high that spacetimes violating the assumption might be judged to be `singular for practical purposes'. However we found this to require such large extrinsic curvatures at the initial surface as to be of limited practical use. What we are going to do here is to allow the `sufficiently small' timescale to shrink linearly in proper time along the geodesic $\gamma:[0,\tau]\to \mathcal{M}$, by taking 
\begin{equation}\label{eq:Ttintro}
	T_t=T_0(1-t/\tau)
\end{equation} 
for some $T_0>0$. On physical grounds, spacetimes for which this is not valid should exhibit very large curvatures along future complete geodesics and therefore also exhibit singular behaviour `for practical purposes'. 
We then employ a partition of unity argument to control the Ricci tensor over the whole of $\gamma$, where the bump function in the partition near $t\in [0,\tau)$ has support width less than $T_t$. 

Simplifying slightly, 
our result is essentially the following. Consider any solution to the semiclassical Einstein--Klein--Gordon equations in which the expected magnitude of the scalar field does not exceed some threshold $\phi_{\text{max}}$. It is assumed that the spacetime of the solution is globally hyperbolic and that the state of the quantum field is Hadamard. Let $S$ be a spacelike Cauchy surface in the spacetime and suppose that, for some $T_0, \tau>0$, all future-directed unit-speed timelike geodesics of length $\tau$ emanating normally from $S$ are $T$-regular where $T_t$ is given by~\eqref{eq:Ttintro}. Suppose also that the SEC holds for a period of proper time $\tau_0$ at the start of each such geodesic. If the extrinsic curvature of $S$ is sufficiently large and negative, where the threshold is determined by $T_0$, $\tau$, $\tau_0$ and the spacetime dimension $n$, then no future-directed timelike curve emanating from $S$ can have proper duration more than $\tau$ and the spacetime
is future-timelike geodesically incomplete. Of course, a sufficiently large extrinsic curvature $-K>(n-1)/\tau_0$ at the initial surface will result, by Hawking's original theorem, in timelike geodesic incompleteness within the timescale $\tau_0$ on which the SEC is assumed. It is important to clarify that
there are indeed regimes where our threshold for $-K$ is much lower than $(n-1)/\tau_0$.
Another way of interpreting our result is that if the initial extrinsic curvature exceeds our threshold
in a solution to the semiclassical Einstein--Klein--Gordon theory obeying the SEC near $S$, then one or more of the following must occur: (a) the spacetime is timelike geodesically incomplete; (b) the field strength must somewhere exceed $\phi_{\text{max}}$; (c) the timescale on which~\eqref{eqn:Riccigen} holds is somewhere smaller than $T_t=T_0(1-t/\tau)$ at proper time $t$ along a geodesic emanating normally from $S$. Depending on the choice of parameters, possibilities (b) and (c) may be regarded as indicating the onset of early-universe conditions and therefore of singular behaviour for practical purposes. 

The paper is structured as follows. Sec.~\ref{sec:quantization} briefly summarises the quantization procedure used in \cite{Fewster:2018pey} to obtain a QEI on the EED (called a Quantum Strong Energy Inequality or QSEI).
The QSEI is recalled in Sec.~\ref{sec:SQEI}, where we also explain why the bounds of the type~\eqref{eqn:Riccigen} may be expected on sufficiently small scales, leading to the idea of a $T$-regular geodesic. In Sec.~\ref{sec:part} we define the partition of unity used to discuss averages over long timescales, deferring many details to the Appendix. Sec.~\ref{sec:sing} presents the singularity theorem that constitutes our main result. It is then shown in Sec.~\ref{sec:cosmology} how quantitative information can be extracted concerning the time at which geodesic incompleteness occurs and the required extrinsic curvature at the initial surface. In the context of a toy cosmological model we show that plausible extrinsic curvatures lead to singular behaviour on timescales that exceed the lifetime of our own universe by as little as one order of magnitude. As a definition of `plausible' parameters, we use values that
are predicted to have occurred in our universe according to $\Lambda$CDM cosmology fitted with PLANCK data~\cite{aghanim2018planck}. We emphasise that the purpose of these calculations is to show that results of the type we have developed do indeed apply beyond the domain of applicability of Hawking's result, and also as a proof of principle that the quantitative information produced by results of the type we have developed might have practical use.
A summary and discussion of our results in Sec.~\ref{sec:conclusions} concludes the paper.

We work in $n$ spacetime dimensions unless otherwise stated and employ $[-,-,-]$ conventions in the Misner, Thorne and Wheeler classification \cite{MTW}. That is, the metric signature is $(+,-,-, \dots)$, the Riemann tensor is defined as $R^{\phantom{\lambda \eta\nu}\mu}_{\lambda \eta\nu}v^\nu=(\nabla_\lambda \nabla_\eta-\nabla_\eta \nabla_\lambda)v^\mu$, and the Einstein equation is $G_{\mu \nu}=-8\pi T_{\mu \nu}$. The d'Alembertian is written $\Box_g = g^{\mu\nu}\nabla_\mu\nabla_\nu$.  For the most part, we adopt units in which $G_N=c=1$.

\section{Quantization}
\label{sec:quantization}

Throughout this work, we assume that the spacetime is a smooth $n$-dimensional Lorentzian manifold $(\mathcal{M},g)$ that is globally hyperbolic. We consider the minimally coupled classical scalar field $\phi$ with Lagrangian density 
\be
L[\phi]=\frac{\sqrt{-g}}{2} \left( (\nabla \phi)^2-M^2 \phi^2 \right) \,,
\ee
where $M$ has dimensions of inverse length, equal to the reduced Compton length of massive particles in the corresponding quantum field theory. The field equation and stress-energy tensor are 
\be
\label{eqn:KleinGordon}
(\Box_g+M^2)\phi=0 
\ee
and
\be
\label{eqn:tmunu}
T_{\mu \nu}=(\nabla_\mu \phi)(\nabla_\nu \phi)+\frac{1}{2} g_{\mu \nu} (M^2 \phi^2-(\nabla \phi)^2) \,,
\ee
while the EED defined by Eq.~\eqref{eqn:SEC} becomes
\be
\label{eqn:classEED}
\rho_U=U^\mu U^\nu (\nabla_\nu \phi) (\nabla_\mu \phi) -\frac{1}{n-2} M^2 \phi^2 \,.
\ee

Quantization follows exactly the procedure described in Ref.~\cite{Fewster:2018pey} so we will be brief and direct the reader there for more details. It is based on  the algebraic approach (see \cite{KhavkineMoretti-aqft} for a review) and starts by introducing a unital $*$-algebra $\mathscr{A}(\mathcal{M})$ on our manifold $\mathcal{M}$, so that self-adjoint elements of $\mathscr{A}(\mathcal{M})$ are observables of the theory. The algebra is generated by elements $\Phi(f)$ where $f \in C_0^\infty(\mathcal{M})$, which represent smeared quantum fields and obey the following relations
\begin{itemize}
	\item
	\textbf{Linearity} \\
	The map $f\rightarrow\Phi(f)$ is complex-linear,
	\item
	\textbf{Hermiticity} \\
	$\Phi(f)^* = \Phi(\overline{f})$ \qquad $\forall f \in C^{\infty}_0(\mathcal{M})$,
	\item
	\textbf{Field Equation} \\
	$\Phi((\Box_g+M^2) f) = 0$  \qquad $\forall f \in C^{\infty}_0(\mathcal{M})$,
	\item
	\textbf{Canonical Commutation Relations} \\
	$\left[\Phi(f), \Phi(h)\right] = iE (f,h)\mathbb{1}$  \qquad $\forall f,h \in C^{\infty}_0(\mathcal{M})$.
\end{itemize} 
Here, $C_0^\infty(\mathcal{M})$ denotes the space of smooth functions of compact support from $\mathcal{M}$ to $\mathbb{C}$ and $E(x,y)=E^{A}(x,y)-E^R(x,y)$ is an antisymmetric bi-distribution equal to the difference of the advanced and retarded Green functions of $\Box_g+M^2$, which exist due to global hyperbolicity of the spacetime.  

In the algebraic approach, states of the theory are described by linear maps $\omega:\mathscr{A}(\mathcal{M})\to\mathbb{C}$ that are normalized ($\omega(\mathbb{1})=1$) and positive ($\omega(A^*A)\ge 0$ for all $A\in\mathscr{A}(\mathcal{M})$). The expression $\omega(A)$ is interpreted as the expectation value of $A\in\mathscr{A}(\mathcal{M})$ in state $\omega$. Physical conditions must be specified to narrow down the choice of states, and a particularly well-behaved class of states is given by those that are Hadamard 
(see \cite{KhavkineMoretti-aqft}, and also~\cite{moretti2021global} for a recent improvement to the original definition~\cite{KayWald:1991}).  

The algebra $\mathscr{A}(\mathcal{M})$ only contains polynomials built from elements of the form $\Phi(f)$. For example, $\mathscr{A}(\mathcal{M})$ contains elements of the form
\begin{equation}\label{eq:Wkex}
	\nord{\Phi^{\otimes 2}}_\omega(f\otimes f) = \Phi(f)\Phi(f) - W(f,f)\mathbb{1}
\end{equation}
for any test function $f$ and any state $\omega$, where 
\begin{equation}
	W(f,h) = \omega(\Phi(f)\Phi(h))
\end{equation} 
is the two-point function of $\omega$. These elements have the property that $\langle\nord{\Phi^{\otimes 2}}_\omega(f\otimes f)\rangle_\omega=0$. 

To form a Wick square, one needs to replace $f\otimes f\in C_0^\infty(\mathcal{M}\times \mathcal{M})$ by a suitable compactly supported distribution supported on the diagonal (and also address finite renormalisation freedoms). This idea can be implemented in the case where $\omega$ is quasifree\footnote{A state is quasifree if all its odd $n$-point functions vanish and all its even $n$-point functions can be expanded as sums of products of the two-point function according to Wick’s theorem.} and Hadamard and leads to the definition of an extended algebra $\mathscr{W}(\mathcal{M})$, described in~\cite{Hollands:2001nf}, that contains $\mathscr{A}(\mathcal{M})$ as a subalgebra, but also contains elements such as Wick polynomials and the smeared stress-energy tensor.
For example, the quadratic Wick polynomials in the field and its derivatives needed to define the quantized stress-energy tensor are given as follows.

Let $f^{\mu_1\cdots \mu_r \nu_1\cdots \nu_s}$ be a smooth compactly supported tensor field and define a compactly
supported distribution $T^{r,s}[f]$ on $\mathcal{M}\times \mathcal{M}$ by
\begin{equation}\label{eq:Trs_def}
	T^{(r,s)}[f](S) = \int_{\mathcal{M}} \dV\, f^{\mu_1\cdots \mu_r \nu_1\cdots \nu_s} \coin{(\nabla^{(r)}\otimes \nabla^{(s )}) S^{\mathrm{sym}}}_{\mu_1\cdots \mu_r \nu_1\cdots \nu_s}.
\end{equation} 
Here,
$S^{\mathrm{sym}}(x,y)=\frac{1}{2} (S(x,y)+S(y,x))$ is the symmetric part of $S\in C^\infty(\mathcal{M}\times \mathcal{M})$, while $\nabla^{(r)}$ is a symmetrised $r$-th order covariant derivative and the double square brackets $\coin{\cdot}$ in the integrand denote a coincidence limit. Then we obtain a smeared Wick polynomial
\begin{equation}
	\nord{\nabla^{(r)}\Phi \nabla^{(s)}\Phi}_\omega (f):= \nord{\Phi^{\otimes 2}}_\omega(T^{r,s}[f]) \,,
\end{equation}
which has a vanishing expectation value in the reference state $\omega$. 
However, as there is no preferred choice of state in a general curved spacetime (\cite{Fewster:2011pe}; see \cite{Fewster_artofstate:2018} for discussion) it is desirable to seek prescriptions leading to local and covariant Wick polynomials. This can be done in various ways and the renormalisation freedom cannot be removed completely. Here we follow the axioms of Hollands and Wald~\cite{Hollands:2001nf,Hollands:2004yh}. In this scheme, Wick polynomials obey a Leibniz rule for differentiation of products, but the field equation cannot be used freely. 

The axioms are enough to define the stress-energy tensor $T_{\mu\nu}$ for perturbatively interacting theories, up to finite renormalisation freedoms, which might be fixed by reference to the expectation values obtained in certain standard states on standard spacetimes, or other criteria. In the case of the linear real scalar field, the finite renormalisation freedom reduces to linear combinations of the tensors $M^4 g_{\mu\nu}$,  $M^2 G_{\mu\nu}$, $I_{\mu\nu}$ and $J_{\mu\nu}$ where the latter are
conserved symmetric curvature tensors arising by functional derivatives of $R^2$ and $R_{\mu\nu}R^{\mu\nu}$ with respect to the metric. 
It is natural to use some of this freedom to choose a prescription in which the Minkowski vacuum state has vanishing expectation value, for example.  At any rate, we assume that a prescription has been fixed as part of the physical specification of the matter model. 
Once this is done, the semiclassical Einstein equation (SEE) for the Einstein--Klein--Gordon system with vanishing cosmological constant can be formulated as follows. 
 \begin{definition}
 	\label{def:solSEE}
 	Fix a locally covariant prescription for the stress-energy tensor of the massive minimally coupled scalar field. A \emph{solution to the SEE} is a triple  
 	$(\mathcal{M},g,\omega)$ where $(\mathcal{M},g)$ is a globally hyperbolic spacetime of even dimension $n=2m$ and $\omega$ is a Hadamard state of the quantised minimally coupled Klein--Gordon field on $(\mathcal{M},g)$, such that
 	\be
 	\label{eqn:SEE}
 	G_{\mu \nu}=  - 8\pi \langle  T_{\mu \nu}  \rangle_\omega   
 	\ee
 	 holds everywhere on $\mathcal{M}$.
 \end{definition}
The system~\eqref{eqn:SEE} is a highly complex set of equations linking the evolution of the metric to that of the two-point function of the quantum state $\omega$. If a prescription has been chosen in which the Minkowski vacuum state has vanishing expectated stress-energy tensor, then Minkowski space furnishes a rather trivial example solution. More generally, the analysis of these equations is highly nontrivial. 
Recently Sanders \cite{Sanders:2020osl} has succeeded in classifying all static symmetric semiclassical solutions to the Einstein--Klein--Gordon system with spatial topology $\mathbb{S}^3$ and arbitrary cosmological constant and allowing for nonminimal coupling. Before that, the existence of solutions has been shown in situations with high levels of symmetry, for example in flat FLRW spacetimes (see \cite{Pinamonti:2013wya} and \cite{Gottschalk:2018kqt}). In this paper, however, we will formulate a singularity theorem for
solutions to the SEE without further discussing the question of existence.

Clearly, if $(\mathcal{M},g,\omega)$ is a solution to the SEE, then the Ricci tensor obeys
\begin{equation}\label{eq:RicciEED}
	R_{\mu\nu} U^\mu U^\nu = - 8\pi \langle \rho_U\rangle_\omega
\end{equation}
for any timelike unit vector $U^\mu$, where $\rho_U$ is the renormalized EED.
As a smeared field, $\rho_U$ takes the form
\be
\label{eqn:EED1}
\rho_U(f) = (\nabla_\mu\Phi\nabla_\nu\Phi)(U^\mu U^\nu f)-\Phi^2\left( \frac{1}{n-2} M^2  f\right) \,,
\ee
where $U$ is any unit timelike vector field and $f$ is any smooth compactly supported function on $\mathcal{M}$.\footnote{At nonminimal coupling there are terms such as $\Phi (\Box_g+\xi R+M^2)\Phi$. These do not necessarily vanish in the Hollands--Wald scheme, but they turn out to be state-independent and therefore cancel in differences of expectation values in Hadamard states.}   

For further manipulations, the following point-splitting formula is useful:
\begin{align}
	\langle (\nabla^{(r)}\Phi\nabla^{(s)}\Phi)(f)\rangle_{\omega} - 
	\langle (\nabla^{(r)}\Phi\nabla^{(s)}\Phi)(f)\rangle_{\omega_0} &= 
	\langle \nord{(\nabla^{(r)}\Phi\nabla^{(s)}\Phi)}_{\omega_0}(f)\rangle_{\omega} \notag\\
	&= T^{r,s}[f](W-W_0) \,,
\end{align}
where both $\omega$ and $\omega_0$ are quasifree Hadamard states
with respective two-point functions $W$ and $W_0$. Because $\omega$ and $\omega_0$ are Hadamard, the difference $\nord{W}=W-W_0$ is necessarily smooth, so the right-hand side is well-defined. Here, and for the rest of this paper, we denote all quantities normal-ordered relative to $\omega_0$ by $\nord{X}$, rather than $\nord{X}_{\omega_0}$. It follows that
\begin{equation}\label{eq:rhosplit}
	\langle \rho_U(f)\rangle_\omega = \langle \rho_U(f)\rangle_{\omega_0} + \langle \nord{\rho_U}(f)\rangle_\omega, 
\end{equation}
where
\begin{equation}\label{eqn:rhoquantum}
	\langle \nord{\rho_{U}}(f)\rangle_\omega = \int_{\mathcal{M}}\dV\, f \left[\coin{ (\nabla_{U} \otimes \nabla_{U} ) \nord{W} } - \left(\frac{M^2}{n-2} \right) \coin{\nord{W}}\right]\,.
\end{equation}
The quantity appearing in square brackets will be denoted 
$\langle \nord{\rho_{U}}\rangle_\omega$ by a slight abuse of notation.

It is important to note that Eq.~\eqref{eq:RicciEED} assumes a particular choice to fix finite renormalisation freedoms has been made for $\rho_U$, inherited from the choices made for the stress-energy tensor. 
In equation~\eqref{eq:rhosplit} the term $\nord{\rho_U}$ is independent of such choices because they are manifested in terms that are multiples of the identity operator.

	\section{Quantum strong energy inequality}
\label{sec:SQEI}

We can now turn to the QEIs satisfied by the quantized EED derived in
Ref.~\cite{Fewster:2018pey}, which established a variety of lower bounds
on averages of $\langle \nord{\rho_U}\rangle_\omega$ along worldlines or over spacetime volumes, allowing for possibly nonminimal coupling to the scalar curvature. Here, we will employ the worldline result for the minimally coupled massive field. 

Let $\gamma$ be a smooth geodesic parametrized by proper time $\tau$
and let $U^\mu$ be a timelike unit vector field agreeing with $\dot{\gamma}^\mu$ on $\gamma$. For any real valued test function   $f \in C_0^\infty (\mathbb{R}, \mathbb{R})$, we write
\be 
\langle \nord{\rho_U} \circ \gamma \rangle_\omega (f^2)=\int d\tau f^2 (\tau) \langle  \nord{\rho_U}\rangle_\omega   (\gamma(\tau)) \,.
\ee 
It was shown in Theorem~5 of~\cite{Fewster:2018pey} that the QEI
\be
\label{eqn:clinemin}
\langle \nord{\rho_U} \circ \gamma \rangle_\omega (f^2) \geq - \left[ \int_0^\infty \frac{d\alpha}{\pi} \phi^*((\nabla_{U} \otimes \nabla_{U}) \, W_0)(\overline{f_\alpha},f_\alpha)+\frac{M^2}{n-2}  \langle \nord{\Phi^2}  \circ \gamma \rangle_\omega(f^2) \right] 
\ee  
holds for all Hadamard states $\omega$, where $f_\alpha(\tau)=e^{i\alpha\tau}f(\tau)$ and $\phi:\mathbb{R}\times\mathbb{R}\to\mathcal{M}\times\mathcal{M}$ by $\phi(\tau,\tau')=(\gamma(\tau),\gamma(\tau'))$.
Consequently the Hadamard-renormalized EED obeys the inequality
\begin{align}
	\label{eqn:clineminHad}
	\langle \rho_U \circ \gamma \rangle_\omega (f^2)  +\frac{M^2}{n-2} \langle \Phi^2  \circ \gamma \rangle_\omega(f^2) & \geq 
	\langle \rho_U \circ \gamma \rangle_{\omega_0} (f^2)+ \frac{M^2}{n-2} \langle \Phi^2  \circ \gamma \rangle_{\omega_0}(f^2) \notag\\
	&\qquad\qquad
	-   \int_0^\infty \frac{d\alpha}{\pi} \phi^*((\nabla_{U} \otimes \nabla_{U}) \, W_0)(\overline{f_\alpha},f_\alpha) 
\end{align} 
for all Hadamard states $\omega$, for any choice of Hadamard reference $\omega_0$ with two-point function $W_0$, and for all real-valued test functions $f$.

This bound is not easy to use in practice, due to the lack of 
closed-form expressions for Hadamard two-point functions that could define a reference state. An alternative would be to derive an `absolute' QSEI that avoids the need for a reference state, and this will be pursued elsewhere.
Here, we will instead work around the obstacle by making a physically motivated  approximation. Namely, if the support of $f$ is sufficiently small, 
the integration in the last term of~\eqref{eqn:clineminHad} involves the two-point function $W_0(x,y)$ near to coincidence along $\gamma$. In this regime, the two-point function is dominated by its leading singularity, which is universal across all Hadamard states and is shared, in particular, by the Minkowski vacuum two-point function of a massless scalar field. For test functions of sufficiently small support -- where `sufficiently small' is determined by the reference state $\omega_0$ and the spacetime geometry but \emph{not} the states $\omega$ of interest -- the bound~\eqref{eqn:clineminHad} continues to hold for all $\omega$ if the integral on the right-hand side is replaced by
\be 
\int_0^\infty \frac{d\alpha}{\pi} \int_{\mathbb{R}^2} dt\,dt' 
\partial_t\partial_{t'}W_{\text{Mink}}(t,\boldsymbol{0};t',\boldsymbol{0})
\overline{f_\alpha(t)} f_\alpha(t') 
\ee
where 
\be
W_{\text{Mink}}(x,x')=\hbar \int \frac{d^{n-1}\boldsymbol{k}}{(2\pi)^{n-1}} \frac{e^{-ik\cdot(x-x')}}{2k}
\ee
is the massless Minkowski vacuum $2$-point function and $k^\mu=(\|\boldsymbol{k}\|,\boldsymbol{k})$.  Planck's constant appears explicitly as it has not been set to unity.
The integral may be evaluated as
\begin{align}
\label{eqn:minkder}
\hbar \int_0^\infty  \frac{d\alpha}{\pi}  \int \frac{d^{n-1}\boldsymbol{k}}{(2\pi)^{n-1}} 
\int_{\mathbb{R}^2} dt\,dt' \frac{k}{2} e^{-i(\alpha+k)(t-t')} f(t)f(t') 
&= \hbar \frac{S_{n-2}}{(2\pi)^n} \int_0^\infty d\alpha\int_0^\infty dk\,
k^{n-1}|\hat{f}(\alpha+k)|^2 \nonumber\\
&=  \hbar \frac{S_{n-2}}{(2\pi)^n} \int_0^\infty du \int_0^u dk\,
k^{n-1}|\hat{f}(u)|^2 \nonumber\\
& =\hbar \frac{S_{n-2}}{(2\pi)^n} \int_0^\infty du\, \frac{u^n}{n}  |\hat{f}|^2(u)  \,,
\end{align}
where $S_{n-2}$ is the volume of the $(n-2)$-dimensional standard unit sphere
and $\hat{f}$ is the Fourier transform $\hat{f}(u)=\int dt\, e^{-iut}f(t)$. 
In the first step, we have used the definition of the transform and integrated out the angular dependence, and in the second we have changed variables from $\alpha$ and $k$ to $u=\alpha+k$ and $k$. 
Restricting to spacetimes with even dimension of at least $4$, writing $n=2m$ for $m \in \mathbb{N}$ and $m\ge 2$, we may use  the fact that $ \hat{f'}(u)=-i u \hat{f}(u)$ to obtain
\be
\label{eqn:extint}
\hbar \frac{S_{n-2}}{(2\pi)^n} \int_0^\infty du\, \frac{u^n}{n}  |\hat{f}|^2(u) =\hbar \frac{S_{2m-2}}{4m(2\pi)^{2m}} \int_{-\infty}^\infty du |\widehat{f^{(m)}}|^2 (u)=\hbar \frac{\pi S_{2m-2}}{2m(2\pi)^{2m}} \int_{-\infty}^\infty dt |f^{(m)}|^2  \,,
\ee
where we have also extended the integral to the full line, using the fact that $|\hat{f}|^2$ is real and even, and used Parseval's theorem in the final step.
As we have argued, this expression can be used -- at least for sufficiently small support width -- as a replacement for the integral in~\eqref{eqn:clineminHad}. We can say a bit more about the magnitude of this term in relation to the others, because smooth functions $f$, compactly supported in an interval $I$ of length $\tau$, obey the Rayleigh--Ritz inequality 
\be
\int_{-\infty}^\infty dt |f^{(m)}|^2  \ge \frac{\lambda^{(m)}_{\min}}{\tau^{2m}}\int_{-\infty}^\infty dt |f|^2\,,
\ee
where $\lambda^{(m)}_{\min}$ is the minimum eigenvalue of the operator $(-1)^m d^{2m}/dx^{2m}$ on $(0,1)$, subject to Dirichlet boundary conditions. (See~\cite{Fewster:2006kt} for other applications of this observation.)
By contrast, the first two terms in~\eqref{eqn:clineminHad} can be bounded below by 
\be
\inf_{I}\left( \langle \rho_U \circ \gamma \rangle_{\omega_0} + \frac{M^2}{n-2} \langle \Phi^2  \circ \gamma \rangle_{\omega_0}\right)\int_{-\infty}^\infty dt |f|^2\,,
\ee
so it is clear that the third term dominates when $\tau$ is sufficiently small. In other words, the first two terms can be made less in magnitude than any prescribed multiple of the third, for small enough $\tau$. Thus for any constant $C>1$, and any $t\in\mathbb{R}$, there is a
timescale $T_t>0$ so that an inequality 
\begin{align}
	\label{eqn:clineminHad3}
	\langle \rho_U \circ \gamma \rangle_\omega (f^2)  +\frac{M^2}{n-2} \langle \Phi^2  \circ \gamma \rangle_\omega(f^2) & \geq -C \hbar \frac{\pi S_{2m-2}}{2m(2\pi)^{2m}} \int_{-\infty}^\infty dt |f^{(m)}|^2  \,,
\end{align} 
holds for all Hadamard states $\omega$ and all real-valued $f$ supported in the interval $(t-T_t/2,t+T_t/2)$. Thus \eqref{eqn:clineminHad3} holds for all real-valued $f$ belonging to
\begin{equation}
	\mathcal{D}_T:= \{f\in C_0^\infty(\mathbb{R}): \text{$\supp f \subset (t-T_t/2,t+T_t/2)$ for some $t\in\mathbb{R}$} \}\,.
\end{equation}
As, in fact, the QEIs derived in Ref.~\cite{Fewster:2018pey} are not expected to be optimally sharp, we may without loss replace $C$ by $1$. The timescale $T_t$ depends on the curve $\gamma$, the spacetime geometry, the mass parameter $M$, and the reference state $\omega_0$. The effect of the mass parameter may be significant because QEI bounds for massive fields are exponentially smaller than those of massless fields -- see~\cite{EvesonFewster:2007} for detailed analysis. Therefore our use of the massless bound is quite conservative and may be expected to hold over reasonable length scales. Furthermore, as we are free to choose any Hadamard state $\omega_0$ as the reference we could in principle choose a different reference state for each value of $t\in\mathbb{R}$ so as to maximise $T_t$. 

Of course it is far from trivial to compute $T_t$ in this way and no  sufficient conditions on $T_t$ are known for~\eqref{eqn:clineminHad3} to hold (with $C=1$) for all real-valued $f\in \mathcal{D}_T$. Such conditions would presumably involve the injectivity radius near $\gamma(t)$ and derivatives of curvature tensors in a neighbourhood thereof. Extrapolating from~\cite{Kontou:2014tha} it may be that only the first three derivatives of the Riemann tensor are relevant at leading order. Lacking sufficient conditions of this type, we turn the problem around and make the following definition.

\begin{definition}
	\label{def:Treg}
	For $\tau \in [0,\infty]$ let $\gamma:[0,\tau]\cap [0,\infty)\to \mathcal{M}$ be a unit-speed timelike geodesic. 
	Let $T:t\mapsto T_t$ be a strictly positive function on $[0,\tau)$ with dimensions of time. 
	We say that $\gamma$ is \emph{$T$-regular} if
	(a) $\gamma$ may be extended to a geodesic $[-\tau_*,\tau]\cap \mathbb{R} \to \mathcal{M}$  where 
	$\tau_*=\sup_{t\in[0,\tau]} (\tfrac{1}{2} T_t-t)$ and (b) 
	the EED of the real scalar field with mass $M$  obeys
	\be\label{eq:EEDfinal}
	\langle \rho_U \circ \gamma \rangle_\omega (f^2)  +\frac{M^2}{n-2} \langle \Phi^2  \circ \gamma \rangle_\omega(f^2) \geq -  \hbar \frac{\pi S_{2m-2}}{2m(2\pi)^{2m}} \int_{-\infty}^\infty dt |f^{(m)}|^2  
	\ee
	for all real-valued $f\in \mathcal{D}_T\cap C_0^\infty(-\tau_*,\tau)$ and all Hadamard states $\omega$.  
\end{definition}

The purpose of condition (a) is to ensure that every interval $(t-T_t/2,t+T_t/2)$ lies in the domain of the geodesic, in order that condition (b) be well-defined (clearly $\tau_*>0$ because $\tau_0>0$). Note that if $\gamma$ is $T$-regular, then it is also $\hat{T}$-regular for any strictly positive
function $t\mapsto \hat{T}_t \le T_t$.

The above definition will be important in the singularity theorems for semiclassical quantum gravity to be proved in Sec.~\ref{sec:sing}, where we consider $T$-regular geodesics $\gamma:[0,\tau]\to \mathcal{M}$ determined by the function $T_t=(1-t/\tau)T_0$. By Def.~\ref{def:Treg}, any solution to the SEE satisfies the geometric condition
\be
\label{eqn:Rmnmin}
\int dt f^2(t) R_{\mu \nu} \dot{\gamma}^\mu \dot{\gamma}^\nu  \leq   \hbar \frac{ S_{2m-2}}{m(2\pi)^{2m-2}} \int_{-\infty}^\infty dt |f^{(m)}|^2 +\frac{4\pi M^2}{m-1} \langle \Phi^2  \circ \gamma \rangle_\omega(f^2) 
\ee
along any $T$-regular unit-speed timelike geodesic, for all $f\in \mathcal{D}_T\cap C_0^\infty(-\tau_*,\tau)$. Note that the timescale $T_t$ on which~\eqref{eq:EEDfinal} holds becomes vanishingly small as $t\to\tau$, which means that the constraint it provides becomes progressively weaker as $\gamma(\tau)$ is approached. 

The second term of the LHS of Eq.~\eqref{eqn:Rmnmin} is state dependent; however, we can restrict $\omega$ to a class of Hadamard states for which the field's magnitude has a finite maximum magnitude. 

\begin{definition}
	\label{def:fsolSEE}
	A $\fmax$-solution to the SEE is a solution to the SEE according to Def.~\ref{def:solSEE} where the state $\omega$ is additionally restricted to a class of Hadamard states for which
	\be
	\label{eqn:maxf}
	\left| \langle \colon \Phi^2 \colon \circ \gamma\rangle_{\omega} \right| \leq \phi_{\max}^2 \,,
	\ee
	where $\fmax$ is a finite constant.
\end{definition}

The above assumptions allow us to make the following statement, which is immediate from Eq.~\eqref{eqn:Rmnmin} using Def.~\ref{def:fsolSEE}.
\begin{proposition}
	\label{the:QEIfin}
	For any $\fmax$-solution to the SEE according to Def.~\ref{def:fsolSEE} on a $T$-regular geodesic according to Def.~\ref{def:Treg} the inequality 
	\be
	\label{eqn:singboundmin}
	\int dt f^2(t) R_{\mu \nu} \dot{\gamma}^\mu \dot{\gamma}^\nu \leq  \frac{ \hbar S_{2m-2}}{m(2\pi)^{2m-2}}  || f^{(m)} ||^2+\frac{4\pi M^2\phi_{\max}^2}{m-1} ||f||^2 
	\ee
	holds for all real-valued $f\in \mathcal{D}_T\cap C_0^\infty(-\tau_*,\tau)$,
	where the norms are those of $L^2(0,\infty)$.\footnote{We only use the inequality for real-valued $f$, but it extends immediately to complex-valued $f$ on considering the real and imaginary parts separately and replacing $f^2$ by $|f(t)|^2$ on the LHS.}
\end{proposition}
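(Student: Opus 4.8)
\section*{Proof proposal for Proposition~\ref{the:QEIfin}}

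The plan is to obtain Proposition~\ref{the:QEIfin} directly as a corollary of the geometric inequality~\eqref{eqn:Rmnmin}, which is already available along any $T$-regular unit-speed timelike geodesic in a solution to the SEE. Recall that~\eqref{eqn:Rmnmin} is nothing but Def.~\ref{def:Treg}(b) rewritten: one inserts the projected semiclassical Einstein equation~\eqref{eq:RicciEED}, namely $R_{\mu\nu}\dot\gamma^\mu\dot\gamma^\nu=-8\pi\langle\rho_U\rangle_\omega$ along $\gamma$, into~\eqref{eq:EEDfinal}, multiplies through by $8\pi$, and collects constants using $8\pi\cdot\tfrac{\pi}{2m(2\pi)^{2m}}=\tfrac{1}{m(2\pi)^{2m-2}}$ and $\tfrac{8\pi}{n-2}=\tfrac{4\pi}{m-1}$ for $n=2m$. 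The only thing left to do, then, is to dispose of the remaining state-dependent term $\tfrac{4\pi M^2}{m-1}\langle\Phi^2\circ\gamma\rangle_\omega(f^2)$ on the right-hand side of~\eqref{eqn:Rmnmin}.

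For a $\fmax$-solution in the sense of Def.~\ref{def:fsolSEE}, the state $\omega$ belongs to a class obeying the pointwise field-strength bound~\eqref{eqn:maxf} along $\gamma$. Since $f$ is real-valued, $f^2\ge 0$ pointwise, so the smeared quantity is controlled by the pointwise bound:
\begin{equation}
	\langle\Phi^2\circ\gamma\rangle_\omega(f^2)=\int dt\, f(t)^2\,\langle\Phi^2\circ\gamma\rangle_\omega(\gamma(t))\ \le\ \phi_{\max}^2\int dt\, f(t)^2\ =\ \phi_{\max}^2\,\|f\|^2 .
\end{equation}
Inserting this into~\eqref{eqn:Rmnmin} and recognising the two integrals $\int|f^{(m)}|^2$ and $\int f^2$ as $\|f^{(m)}\|^2$ and $\|f\|^2$ gives precisely~\eqref{eqn:singboundmin}, valid for every real-valued $f\in\mathcal{D}_T\cap C_0^\infty(-\tau_*,\tau)$ — which is exactly the class of test functions for which~\eqref{eqn:Rmnmin} was asserted.

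There is essentially no obstacle: the one piece of genuine content is the monotonicity step that promotes the \emph{pointwise} bound~\eqref{eqn:maxf} to a bound on the \emph{smeared} quantity against $f^2$, which is legitimate precisely because $f^2$ has a fixed sign and the state-dependent term enters~\eqref{eqn:Rmnmin} with a positive coefficient, so it is the upper bound on $\langle\Phi^2\rangle_\omega$ that is needed (not the lower one). The remaining points are purely bookkeeping: that the function class $\mathcal{D}_T\cap C_0^\infty(-\tau_*,\tau)$ is common to Def.~\ref{def:Treg}(b), to~\eqref{eqn:Rmnmin} and to the statement being proved; that the $L^2$ norms may be read over $\mathbb{R}$ — equivalently over any interval containing $\supp f$, since $f$ and its derivatives vanish off $\supp f$ — consistently with~\eqref{eqn:Rmnmin}; and that the complex-valued extension noted in the footnote follows at once by splitting into real and imaginary parts and replacing $f^2$ by $|f|^2$ throughout.
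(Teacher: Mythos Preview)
Your proposal is correct and follows essentially the same approach as the paper, which states the proposition is ``immediate from Eq.~\eqref{eqn:Rmnmin} using Def.~\ref{def:fsolSEE}''. You have simply made explicit the monotonicity step $\langle\Phi^2\circ\gamma\rangle_\omega(f^2)\le \phi_{\max}^2\|f\|^2$ (valid because $f^2\ge 0$) that the paper leaves implicit.
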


	\section{Partition of unity}
\label{sec:part}

Let $\tau\in (0,\infty]$ and suppose $\gamma:[0,\tau]\cap [0,\infty) \to \mathcal{M}$ is a $T$-regular unit-speed timelike geodesic. According to Proposition~\ref{the:QEIfin} a $\fmax$-solution to the SEE satisfies the QEI of Eq.~\eqref{eqn:singboundmin} on $\gamma$. By definition, this means that the QEI holds for test functions of sufficiently small support, where a neighborhood of $t\in [0,\tau)$ is deemed sufficiently small if it is contained in $(t-\tfrac{1}{2}T_t,t+\tfrac{1}{2}T_t)\cap [-\tau_*,\tau)$. However, the singularity theorems require control over averages of the EED taken over the whole of $[0,\tau)$.
These averages may be addressed by choosing a partition of unity to break the long-term average into pieces each of which is sufficiently small that~\eqref{eqn:singboundmin} holds.

Accordingly, we seek smooth compactly supported functions $\phi_n\in C_0^\infty(-\tau_*,\tau)$ such that 
\be\label{eq:partition}
\sum_{n=1}^\infty \phi_n^2=1 
\ee
holds on $[0,\tau)$ and with $\supp \phi_n\subset [t_{n-1},t_n]$,
where the strictly increasing sequence $t_n$ ($n\in\mathbb{N}_0$) is chosen so that $t_n\to\tau$ as $n\to\infty$ and
\be\label{eq:intervalnests}
[t_{n-1},t_n] \subset [t_n-T_{t_n}/2,t_n+T_{t_n}/2]
\ee
for each $n$, which implies $\phi_n\in \mathcal{D}_T\cap C_0^\infty(-\tau_*,\tau)$.

For each $f\in C_0^\infty(0,\tau)$, we may make a decomposition $f^2=\sum_{n=1}^\infty (f\phi_n)^2$ 
and apply Proposition~\ref{the:QEIfin} to each function $f\phi_n$ separately. Summing up, we obtain the inequality 
\be\label{eq:EEDpou}
\int_0^\tau dt f^2(t) R_{\mu \nu} \dot{\gamma}^\mu \dot{\gamma}^\nu \leq Q_0  ||f||^2 + Q_m \sum_{n=1}^\infty  \|(f\phi_n)^{(m)}\|^2  \,,
\ee 
for any $f\in C_0^\infty(0,\tau)$, where the sum contains only finitely many nonzero terms and 
\be
\label{eqn:QgQf}
Q_m= \frac{\hbar S_{2m-2}}{(2\pi)^{2m-2}}   \,, \qquad \text{and} \qquad Q_0=\frac{4\pi M^2\phi_{\max}^2}{m-1}  
\ee
(recall that $m\ge 1$ so there is no conflict in the above definition).
For suitably chosen $\phi_n$, the second term in~\eqref{eq:EEDpou} may be estimated in terms of a weighted Sobolev norm.

As the construction of the functions $\phi_n$ and the derivation of the required estimates is quite complex, we summarise the main points here and provide the details in Appendix~\ref{app:det}. First, in Section~\ref{ssec:POUconstruction}, we will construct functions $\phi_n$ obeying the conditions above, for any positive, nonincreasing continuous function $T_t$. Here we choose $t_n$ to satisfy the equation $t_{n+1}-t_n=\tfrac{1}{2}T_{t_{n+1}}$ for $n\ge 0$ and $t_0=-\tfrac{1}{2}T_0$, from which $t_1=0$ follows; it is shown in particular that $t_n\to\tau$ as $n\to\infty$. The corresponding $\phi_n$ are defined by 
\be
\label{eqn:phi}
\phi_n(t)=\begin{cases} 
	F((t-t_n)/T_{t_n}+\tfrac{1}{2}) \,, & t< t_n \,,\\
	G((t-t_n)/T_{t_{n+1}})  & t \geq t_n  \,,
\end{cases}
\ee 
where $F$ and $G$ are the functions $F(x)=\sin(\theta(x))$ and $G(x)=\cos(\theta(x))$, and $\theta\in C^\infty(\mathbb{R})$ is the smooth  nonnegative function given by 
\be
\label{eqn:theta}
\theta(x)=A\int_0^x  H(x') H(1/2-x')e^{-1/x'} e^{-1/(1/2-x')}dx' \,,
\ee
where $H(x)$ is the Heaviside step function. Here, $A\approx 3.2392 \times 10^4$ is chosen so that $\theta(1/2)=\pi/2$. The function $\theta$ interpolates smoothly from $0$ to $\pi/2$, as shown in Fig.~\ref{fig:thetaplot}. 

\begin{figure}
	\centering
	\includegraphics[height=8cm]{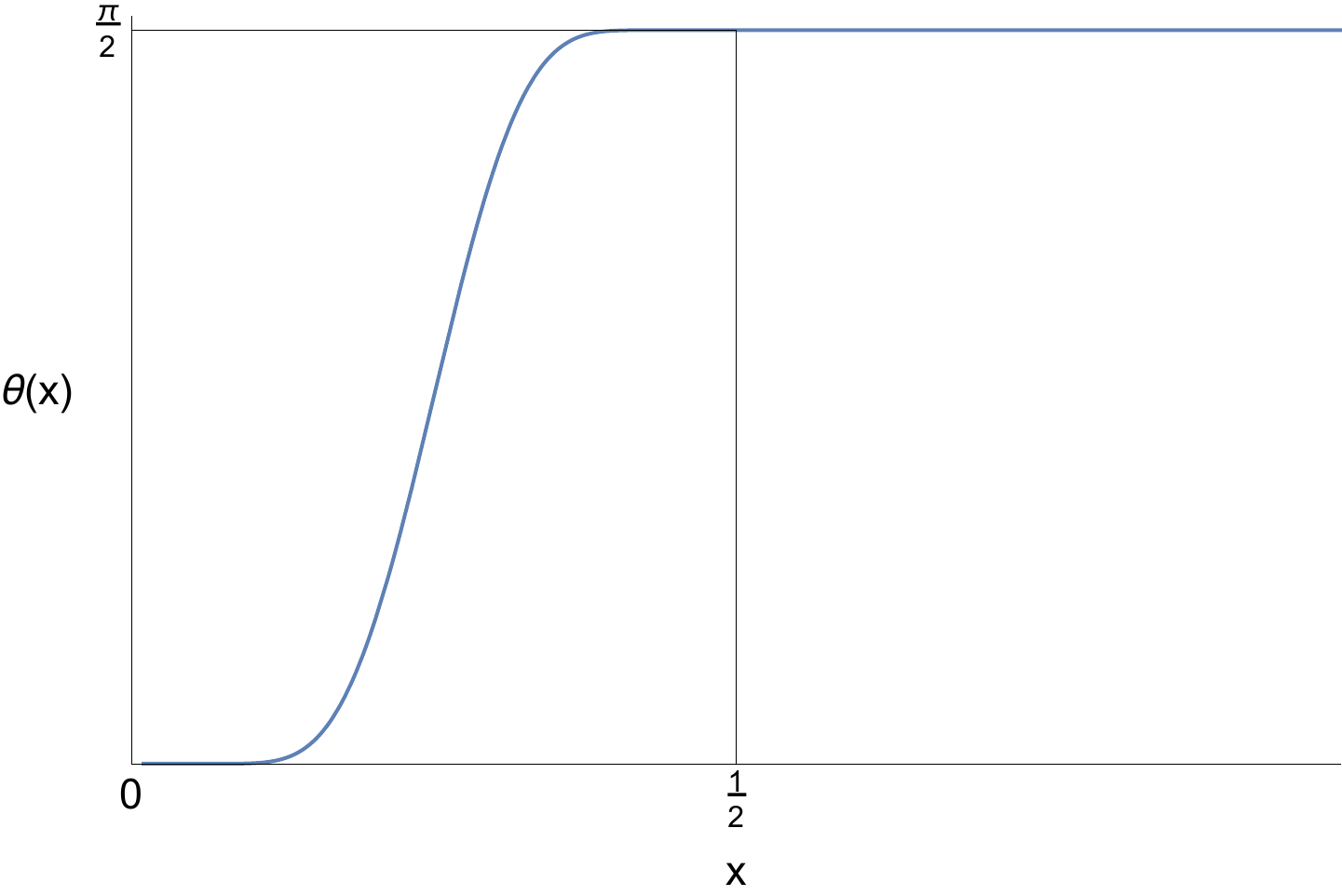}
	\caption{The function $\theta(x)$ defined in Eq.~\eqref{eqn:theta}.}
	\label{fig:thetaplot}
\end{figure}

In Section~\ref{ssec:exampletn} we will consider examples of $T_t$ for both finite and infinite intervals. It will also be shown that other examples can be obtained by reverse engineering the function $T_t$ from a sequence $t_n$. A function of interest, appropriate for a finite interval $[0,\tau]$, is $t\mapsto T_t= T_0(1-t/\tau)$. For that choice we have
\be
\label{eqn:tn}
t_n=\tau \left(1-\left(\frac{2\tau}{T_0+2\tau}\right)^{n-1}\right) \,,
\ee
for all $n\in \mathbb{N}_0$. The functions $\phi_n$ have vanishing support width in the limit $t\to \tau$ as illustrated in Fig.~\ref{fig:phin}.

\begin{figure}
	\centering
	\includegraphics[height=8cm]{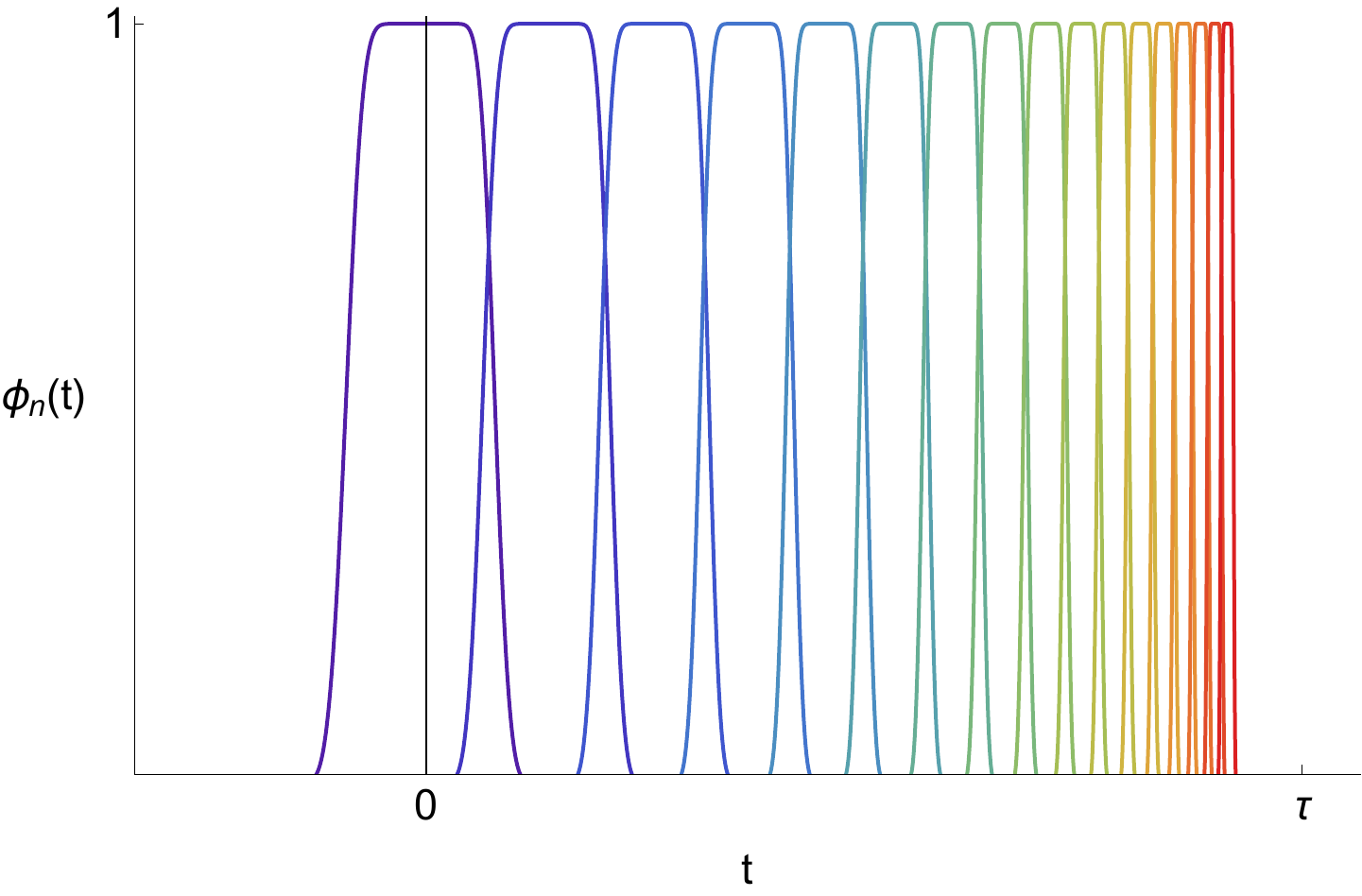}
	\caption{The series of functions $\phi_n$ for the linear $T_t=T_0(1-t/\tau)$. The center of each function is given by $t_n$ (Eq.~\eqref{eqn:tn}) and their support shrinks to a point as $t \to \tau$. }
	\label{fig:phin}
\end{figure}

In Section~\ref{ssec:Sobolev} we estimate the second term of Eq.~\eqref{eq:EEDpou}. Our aim is to estimate this quantity in terms of (weighted) Sobolev norms
of $f$, keeping reasonably explicit control of any constants appearing.
We will do that both for a partition of unity corresponding to a general $T$-regular geodesic and a specific choice of $T_t$ for a finite interval. In particular, for the choice of $T_t= T_0(1-t/\tau)$ and choosing  $\tau_0$ with $0<\tau_0<t_2=\tau T_0/(T_0+2\tau)$, we will show that
\begin{equation} 
	\label{eqn:fphiestim}
	\sum_{n=1}^\infty \|(f \phi_n)^{(m)}\|^2 \leq 2   \left[\sum_{j=0}^m 
	H_j \binom{m}{j} \left( \frac{\|\chi f^{(m-j)}\|}{(2t_2)^j} + c^j\left\|\frac{(1-\chi)f^{(m-j)}}{(\tau-\cdot)^j}\right\|\right)
	\right]^2 \equiv |||f|||^2 \,,
\end{equation}
where $\chi$ is the characteristic function of $[0,\tau_0]$, 
\be
\label{eqn:Hmax}
H_i = \max \{\| F^{(i)}\|_\infty, \|G^{(i)}\|_\infty\} \,,
\ee
and
\be
c=\frac{2\tau^2}{(2\tau-T_0)T_0} \,.
\ee
The upshot of this analysis is expressed by the following theorem.
	\begin{theorem}
		\label{the:Riccfin}
	Let $0<\tau<\infty$, $T_0>0$ and $0<\tau_0<\tau T_0/(T_0+2\tau)$, define $T_t= T_0(1-t/\tau)$.
	For any $\fmax$-solution to the SEE according to Def.~\ref{def:fsolSEE} on a $T$-regular geodesic according to Def.~\ref{def:Treg}, the inequality
	\be
	\label{eqn:Riccfin}
	\int d\tau f^2(t) R_{\mu \nu} \dot{\gamma}^\mu \dot{\gamma}^\nu \leq Q_m |||f|||^2+Q_0 ||f||^2   
	\ee
	holds for all real-valued $f\in W_0^m(0,\tau)$,
	where $|||f|||^2$ is given in Eq.~\eqref{eqn:fphiestim} and the norms are those of $L^2(0,\infty)$.
	
\end{theorem}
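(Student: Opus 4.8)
\noindent The plan is to assemble the theorem from two ingredients already in hand — the partition-of-unity inequality~\eqref{eq:EEDpou} and the Sobolev estimate~\eqref{eqn:fphiestim} — and then to pass from test functions to $W_0^m(0,\tau)$ by density. Granting~\eqref{eqn:fphiestim}, which is established in Appendix~\ref{app:det} for precisely the profile $T_t=T_0(1-t/\tau)$, the combination is immediate and the only remaining work is the density step; the real substance of the result lies in the estimate~\eqref{eqn:fphiestim} itself.

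In detail, I would fix the partition $\{\phi_n\}_{n\ge1}$ of Section~\ref{sec:part} specialised to $T_t=T_0(1-t/\tau)$, for which the breakpoints are the explicit sequence~\eqref{eqn:tn}: in particular $t_1=0$ and $t_2=\tau T_0/(T_0+2\tau)$, so the hypothesis $0<\tau_0<t_2$ is exactly what makes the characteristic function $\chi$ of $[0,\tau_0]$ appearing in~\eqref{eqn:fphiestim} meaningful, while $\tau_*=\tfrac12 T_0$ since $\tfrac12 T_t-t$ is decreasing. These $\phi_n$ satisfy~\eqref{eq:partition} and~\eqref{eq:intervalnests}, hence lie in $\mathcal{D}_T\cap C_0^\infty(-\tau_*,\tau)$, and so does $f\phi_n$ for every $f\in C_0^\infty(0,\tau)$. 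Writing $f^2=\sum_n(f\phi_n)^2$ — a finite sum, since $\supp f$ is a compact subset of $(0,\tau)$ and $t_n\to\tau$ — applying Proposition~\ref{the:QEIfin} to each $f\phi_n$ and summing gives~\eqref{eq:EEDpou}, and substituting~\eqref{eqn:fphiestim} yields~\eqref{eqn:Riccfin} with the constants~\eqref{eqn:QgQf}, for all real-valued $f\in C_0^\infty(0,\tau)$.

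The estimate~\eqref{eqn:fphiestim} is where I expect the bulk of the work. One expands $(f\phi_n)^{(m)}=\sum_{j=0}^m\binom{m}{j}f^{(m-j)}\phi_n^{(j)}$ by the Leibniz rule; since $\phi_n$ is built from rescaled copies of the fixed profiles $F,G$ with widths $T_{t_n}$ and $T_{t_{n+1}}$, one has $\|\phi_n^{(j)}\|_\infty\le H_j/T_{t_{n+1}}^{j}$ with $H_j$ as in~\eqref{eqn:Hmax}. From~\eqref{eqn:tn} one reads off $T_{t_n}=(T_0/\tau)(\tau-t_n)$ with the $t_n$ geometrically spaced, and a short computation then shows that on $\supp\phi_n$ the relevant rescaling width is bounded below by $2t_2=T_{t_2}$ on the portion of the support lying in $[0,\tau_0]$ and by $(\tau-t)/c$ on the rest — this is exactly where the hypothesis $\tau_0<t_2$ and the constant $c$ enter. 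Splitting $f^{(m-j)}=\chi f^{(m-j)}+(1-\chi)f^{(m-j)}$, applying these two lower bounds to the two pieces, and summing over $n$ — using that the supports $\supp\phi_n$ overlap only with their immediate neighbours, so that squaring and summing costs at most a factor $2$ — reproduces the right-hand side of~\eqref{eqn:fphiestim}. Keeping all the constants explicit throughout is the fiddly part.

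Finally I would extend~\eqref{eqn:Riccfin} from $C_0^\infty(0,\tau)$ to $W_0^m(0,\tau)$ by density. For a $\fmax$-solution the geodesic $\gamma:[-\tau_*,\tau]\to\mathcal{M}$ is defined on the closed interval, so $R_{\mu\nu}\dot\gamma^\mu\dot\gamma^\nu$ is continuous, hence bounded, along it; consequently $f\mapsto\int_0^\tau f^2\,R_{\mu\nu}\dot\gamma^\mu\dot\gamma^\nu\,dt$ and $f\mapsto\|f\|^2$ are both $L^2$-continuous. The one delicate point is the term $|||f|||^2$, whose weights $(\tau-\cdot)^{-j}$ blow up at $\tau$: for $f\in W_0^m(0,\tau)$ with $|||f|||<\infty$ (the inequality being vacuous otherwise) one needs approximants $f_k\in C_0^\infty(0,\tau)$ converging to $f$ both in $W_0^m$ and in these weighted norms, which one obtains by smoothly truncating $f$ away from $\tau$ at scale $1/k$ and mollifying — finiteness of $|||f|||$ forces the $f^{(m-j)}$ to decay fast enough near $\tau$ that the truncation error vanishes in the weighted $L^2$ norms — and then letting $k\to\infty$ in~\eqref{eqn:Riccfin} applied to $f_k$.
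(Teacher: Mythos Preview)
Your approach matches the paper's for the $C_0^\infty$ case: combine~\eqref{eq:EEDpou} with~\eqref{eqn:fphiestim}, exactly as the paper does. Your sketch of how~\eqref{eqn:fphiestim} is obtained is also broadly faithful to what appears in Appendix~\ref{ssec:Sobolev}.

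The one substantive difference is in the density step. You propose to treat the case $|||f|||=\infty$ as vacuous and otherwise build bespoke approximants by truncation and mollification, relying on finiteness of $|||f|||$ to control the truncation errors in the weighted norms. That argument can be made to work (the Leibniz cross-terms $f^{(m-j-\ell)}\chi_k^{(\ell)}$ are handled because finiteness of the $(j+\ell)$-th weighted term in $|||f|||$, with $j+\ell\le m$, controls the tail integral), but it is delicate. The paper instead invokes a higher-order Hardy inequality~\eqref{eq:higherHardy}, proved in Appendix~\ref{ssec:Hardy}, to show directly that $|||\cdot|||$ is bounded by a constant times the $W_0^m$-norm. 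This has two advantages: it makes \emph{any} $W_0^m$-approximating sequence work without special construction, and it shows that $|||f|||$ is in fact finite for every $f\in W_0^m(0,\tau)$, so the vacuous case never arises. Your route trades this extra conclusion for avoiding the Hardy inequality.
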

Here, we recall that the Sobolev space $W_0^m(0,\tau)$ is the completion of the space $C_0^\infty(0,\tau)$ in the norm $\|f\|_m = (\|f\|^2 + \tau^{2m}\|f^{(m)}\|^2)^{1/2}$.
In particular, its elements are functions with $m-1$ absolutely continuous\footnote{The absolutely continuous functions on $[0,\tau]$ those arising as integrals of $L^1$ (and hence in particular $L^2$) functions.}
derivatives which all
vanish (including the function itself) at the endpoints, and whose $m$'th distributional derivative may be identified with a element of $L^2(0,\tau)$.
\begin{proof}
	For $f\in C_0^\infty(0,\tau)$, the inequality~\eqref{eqn:Riccfin} follows immediately from Proposition~\ref{the:QEIfin} using Eqs.~\eqref{eq:EEDpou} and \eqref{eqn:fphiestim}.
	Now let $f\in W_0^m(0,\tau)$ and let $f_n\in C_0^\infty(0,\tau)$ be a sequence converging to
	$f$ in $W_0^m(0,\tau)$. We may apply~\eqref{eqn:Riccfin} to each $f_n$, and observe that 
	$\int d\tau f_n(\tau)^2 R_{\mu\nu}\dot{\gamma}^\mu\dot{\gamma}^\nu\to \int d\tau f(\tau)^2 R_{\mu\nu}\dot{\gamma}^\mu\dot{\gamma}^\nu$ and $\|f_n\|\to \|f\|$ as $n\to\infty$. Therefore
	\eqref{eqn:Riccfin} will apply to $f$ provided one also has $|||f_n|||\to |||f|||<\infty$. 
	As $\|\chi f_n^{(m-j)}\|\to\|\chi f^{(m-j)}\|$ for all $0\le j\le m$, attention can be focused on the remaining terms contributing to $|||\cdot|||$. These are controlled using the higher-order Hardy inequality (proved in Appendix~\ref{ssec:Hardy})
	\begin{equation}\label{eq:higherHardy}
		\int_{\tau_0}^\tau \frac{|h(t)|^2}{(\tau-t)^{2j}} \, dt \le \frac{4^j}{((2j-1)!!)^2}\int_{\tau_0}^\tau |h^{(j)}(t)|^2 \,dt ,
	\end{equation}
	which holds for all absolutely continuous functions $h$ whose first $j-1$ derivatives are also absolutely continuous, and which obey $h(\tau)=h'(\tau)=\cdots=h^{(j-1)}(\tau)=0$ and hence for all $h\in W_0^j(0,\tau)$.
	It is then elementary to see that $|(|||f_n|||-|||f|||)|\le |||f_n-f|||\le \text{const}\times \|f_n-f\|_m \to 0$, completing the proof. 
\end{proof}

\section{The singularity theorem}
\label{sec:sing}

We now apply methods from \cite{Fewster:2019bjg} to prove a singularity theorem applicable to  $\fmax$-solutions ($\mathcal{M},g,\omega)$ to the SEE. We will assume that there is a Cauchy surface $S$ in $\mathcal{M}$ from which all normally emanating future-directed timelike geodesics of length $\tau$ are $T$-regular, with $T_t= T_0(1-t/\tau)$ for $T_0$ positive. Furthermore, we assume that the SEC holds for a short period of proper time along these geodesics after $S$ (this corresponds to `Scenario 1' in \cite{Fewster:2019bjg}). We will prove that if the initial extrinsic curvature scalar on $S$ is sufficiently negative then every such geodesic contains a focal point. It follows by standard arguments (summarised in~\cite{Fewster:2019bjg}) that the spacetime
is future timelike geodesically incomplete -- indeed, no future-directed timelike curve leaving $S$ can have length greater than $\tau$.  
In full detail, we will prove the following singularity theorem, which closely follows the structure of Theorems 4.2 and 4.4 of Ref.~\cite{Fewster:2019bjg}. 
\begin{theorem}
	\label{the:sing} 
	Let $(\mathcal{M},g,\omega)$ be a $\fmax$-solution of the SEE according to Def.~\ref{def:fsolSEE} 
	and let $S$ be a smooth spacelike Cauchy surface in $\mathcal{M}$. Suppose that:
	\begin{itemize}
		\item[(i)] there exist constants $\tau>0$ and $T_0>0$ so that every unit-speed timelike geodesic $\gamma:[0,\tau] \to \mathcal{M}$ emanating normally
		from $S$ is $T$-regular, with $T_t= T_0(1-t/\tau)$ (see Def.~\ref{def:Treg})
		\item[(ii)]
		there exist constants $\rho_0 \leq 0$ and $\tau_0 \in (0,\tau T_0/(T_0+2\tau)) $ so that, along each such $\gamma$, the inequality $R_{\mu \nu} \dot{\gamma}^\mu \dot{\gamma}^\nu|_{\gamma(t)} \leq \rho_0$ holds for $t \in [0,\tau_0]$;
		\item[(iii)] the extrinsic curvature $K$ of $S$ satisfies
		\be
		-K \geq \min\left\{\frac{n-1}{\tau_0}, \nu_*(T_0,\tau_0,\tau,\rho_0) \right\} \,,
		\ee
		with $\nu_*(T_0,\tau_0,\tau,\rho_0)$ given by Eq.~\eqref{eqn:Jest} below.
	\end{itemize}
	Then no future-directed timelike curve emanating from S has length greater than $\tau$ and $\mathcal{M}$ is future timelike geodesically incomplete.
\end{theorem}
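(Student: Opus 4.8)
The plan is to argue by contradiction, reducing — via the standard focal-point and limit-curve machinery summarised in Ref.~\cite{Fewster:2019bjg} — to the assertion that \emph{every} unit-speed timelike geodesic $\gamma:[0,\tau]\to\mathcal{M}$ issuing normally from $S$ develops a focal point to $S$ at some parameter value in $(0,\tau]$. Granting this, no future-directed timelike curve from $S$ can exceed proper length $\tau$ (a maximising geodesic from $S$ to any point of $D^+(S)=\mathcal{M}$ would otherwise contain no focal point before its endpoint), and future timelike geodesic incompleteness follows because $S$ is Cauchy. So fix $\gamma$ as above and suppose, for contradiction, that it has no focal point to $S$ in $(0,\tau]$. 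Morse index theory for focal points then tells us that the index form, restricted to variation fields perpendicular to $\dot\gamma$, vanishing at $\gamma(\tau)$ and tangent to $S$ at $\gamma(0)$, is positive definite. Evaluating it on $fE_1,\dots,fE_{n-1}$ for a parallel orthonormal frame $\{E_i\}$ of the normal bundle along $\gamma$, and summing over $i$, gives: for every Lipschitz $f$ on $[0,\tau]$ with $f(\tau)=0$ and $f\not\equiv 0$,
\be
(n-1)\int_0^\tau f'(t)^2\,dt + \int_0^\tau f(t)^2\,R_{\mu\nu}\dot{\gamma}^\mu\dot{\gamma}^\nu\big|_{\gamma(t)}\,dt + K\,f(0)^2 > 0\,,
\ee
where $K$ denotes the trace of the extrinsic curvature of $S$, normalised so that Hawking's classical hypothesis reads $-K>(n-1)/\tau_0$.

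Next I would dispose of the ``Hawking branch'' and then work the ``quantum branch''. If $-K\ge (n-1)/\tau_0$, take $f(t)=\max\{0,1-t/\tau_0\}$: since (ii) supplies the SEC on $[0,\tau_0]$, the displayed inequality fails for this $f$ already on $[0,\tau_0]$, producing a focal point within proper time $\tau_0\le\tau$; this is Hawking's original argument in index-form guise. Otherwise (iii) gives $-K\ge\nu_*$, and I bound the Ricci integral from above. Choose $f$ with $f(\tau)=0$ and $f\equiv f(0)$ on $[0,\tau_0]$, and split $\int_0^\tau=\int_0^{\tau_0}+\int_{\tau_0}^\tau$. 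On $[0,\tau_0]$, (ii) gives $\int_0^{\tau_0}f^2 R_{\mu\nu}\dot{\gamma}^\mu\dot{\gamma}^\nu\le \rho_0\int_0^{\tau_0}f^2=\rho_0\tau_0 f(0)^2$. For the remainder I would invoke the $T$-regularity bound through the partition-of-unity estimate of Sec.~\ref{sec:part}: although Theorem~\ref{the:Riccfin} is phrased for $f\in W_0^m(0,\tau)$, which vanishes at $S$, that construction adapts to test functions non-vanishing at $\gamma(0)$, since the leftmost partition function may be chosen equal to $1$ at $S$ while still being supported in an interval of width below $T_0$, as Def.~\ref{def:Treg} requires. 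Carrying this through one reaches
\be
\int_0^\tau f^2 R_{\mu\nu}\dot{\gamma}^\mu\dot{\gamma}^\nu \le \rho_0\tau_0 f(0)^2 + Q_m\,|||f|||^2 + Q_0\,\|f\|^2\,,
\ee
with $Q_m,Q_0$ as in~\eqref{eqn:QgQf} and $|||\cdot|||$ the (weighted Sobolev) functional of~\eqref{eqn:fphiestim}.

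Combining the two displays, every such $f$ normalised to $f(0)=1$ must satisfy
\be
-K \;\le\; (n-1)\|f'\|^2 + \rho_0\tau_0 + Q_m\,|||f|||^2 + Q_0\,\|f\|^2\,,
\ee
so I would now \emph{minimise} the right-hand side over a convenient finite-parameter trial family — for instance $f\equiv 1$ on $[0,\tau_0]$ and a rescaled fixed smooth profile, decaying to $0$ with enough vanishing derivatives at $\tau$, on $[\tau_0,\tau]$ — obtaining a finite number; this optimised value is what I would record as $\nu_*(T_0,\tau_0,\tau,\rho_0)$ in Eq.~\eqref{eqn:Jest}. The hypothesis $-K\ge\nu_*$ then contradicts the strict inequality above, the minimum being attained within the chosen trial family. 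Hence $\gamma$ has a focal point in $(0,\tau]$, and since $\gamma$ was arbitrary the theorem is proved.

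The routine parts are the reduction in the first paragraph (standard given Ref.~\cite{Fewster:2019bjg}) and the Hawking branch. The real work — and the main obstacle — is the quantum branch: choosing the trial family, computing $\|f'\|^2$, $\|f\|^2$ and especially $|||f|||^2$, which requires the weighted-Sobolev and higher-order Hardy estimates underlying~\eqref{eqn:fphiestim}, and carrying out the minimisation cleanly enough to produce a usable closed form for $\nu_*$ in terms of $T_0,\tau_0,\tau,\rho_0$ and $n$. Secondary care is needed to verify that the partition-of-unity bound of Sec.~\ref{sec:part} genuinely survives the passage to test functions non-vanishing at $S$, and to fix the signs in the index-form inequality consistently with the $[-,-,-]$ conventions of the paper.
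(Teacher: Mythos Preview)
Your overall architecture — reduce to showing focal points along every normal geodesic of length $\tau$, then invoke the index-form/focal-point machinery from~\cite{Fewster:2019bjg} — matches the paper, and your treatment of the Hawking branch is fine. The difficulty lies in the quantum branch, and there your displayed inequality
\[
\int_0^\tau f^2 R_{\mu\nu}\dot\gamma^\mu\dot\gamma^\nu \;\le\; \rho_0\tau_0 f(0)^2 + Q_m\,|||f|||^2 + Q_0\,\|f\|^2
\]
does not follow from the ingredients you list. The QEI bound of Proposition~\ref{the:QEIfin} (and hence Theorem~\ref{the:Riccfin}) controls $\int (\text{test function})^2 R$ over the \emph{full support} of a compactly supported test function; it does not bound a restricted integral $\int_{\tau_0}^\tau f^2 R$ for an $f$ that is nonzero at $\tau_0$. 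So you cannot simply add the SEC contribution on $[0,\tau_0]$ to the QEI contribution on the rest. Nor does your proposed adaptation to test functions with $f(0)\neq 0$ rescue this: to bring $f\phi_1$ into $\mathcal{D}_T$ you must extend $f$ smoothly onto $[-\tau_*,0)$, whereupon summing the partitioned QEIs produces $\int_0^\tau f^2 R + \int_{-\tau_*}^0 (f\phi_1)^2 R$ on the left, and the extra integral over $[-\tau_*,0)$ involves $R_{\mu\nu}\dot\gamma^\mu\dot\gamma^\nu$ on the pre-$S$ portion of the geodesic, where you have no hypothesis whatsoever.

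The paper sidesteps both problems by decomposing the \emph{integrand} rather than the domain. With $g$ as in~\eqref{eq:gdef} and $\varphi$ as in~\eqref{eq:scen1phidef} (vanishing at $0$, identically $1$ on $[\tau_0,\tau]$), one writes $g^2=(g\varphi)^2+(1-\varphi^2)$, using $g\equiv 1$ on $[0,\tau_0]$. The first piece has $g\varphi\in W_0^m(0,\tau)$ genuinely, so Theorem~\ref{the:Riccfin} applies without any extension or adaptation; the second piece is supported in $[0,\tau_0]$ where assumption~(ii) gives the pointwise bound, yielding the $\rho_0\tau_0(1-A_m)$ term in $\nu_*$. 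This is Lemma~\ref{lem:Jest}. Note also that no minimisation is performed: $\nu_*$ is simply $J[g]$ evaluated at this one explicit $g$, chosen so that the relevant norms (Eqs.~\eqref{eq:gnorms}--\eqref{eq:phigests}) are computable in closed form.
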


The idea of the proof is to show that each unit-speed geodesic $\gamma:[0,\tau]\to\mathcal{M}$ emanating normally from $S$ contains a focal point. Under the hypotheses of the theorem, let $\rho(t)= R_{\mu\nu} U^\mu U^\nu|_{\gamma(t)}$ for some such $\gamma$,\footnote{By the SEE, $\rho(t)=-8\pi\langle\rho_U\rangle_\omega$, which is a slightly unfortunate collision of notation between the present paper and~\cite{Fewster:2019bjg}.} which defines a smooth function on $[0,\tau]$ because the state $\omega$ is Hadamard.
By Prop.~2.2 of Ref.~\cite{Fewster:2019bjg}, which brings together results from \cite{ONeill}, \cite{Fewster:2010gm} and references therein, $\gamma$ contains a focal point to $S$
if there is a  piecewise smooth $g:[0,\tau]\to\mathbb{R}$ with $g(0)=1$ and $g(\tau)=0$, such that
\begin{equation}
	J[g]\le -K|_{\gamma(0)}\,,
\end{equation}
where
\begin{equation}
	\label{eqn:Jdef}
	J[g]=\int_0^\tau\left( (n-1)\dot{g}(t)^2 + g(t)^2 \rho(t)  \right)\,dt\,.
\end{equation} 
We proceed to estimate $J[g]$ for the specific piecewise smooth function $g\in C^{m-1}([0,\tau])$ given by  
\begin{equation}\label{eq:gdef}
	g(t) = \begin{cases}
		1 & t\in [0,\tau_0) \\
		I(m,m;(\tau-t)/(\tau-\tau_0)) & t\in[\tau_0,\tau]\,,
	\end{cases}
\end{equation}
where $I(p,q;x)$ [commonly written $I_x(p,q)$] is the regularised incomplete Beta function.\footnote{
The function $g$ was also used in Sec.~4.1 of Ref.~\cite{Fewster:2019bjg}, where it was denoted $f$.}
Note that $g(0)=1$, $g(\tau)=0$. A plot of $g$ for $m=2$ is given in Fig.~\ref{fig:gphi}, which 
also shows the related function $\varphi:[0,\tau]\to\mathbb{R}$ defined by	
\begin{equation}\label{eq:scen1phidef}
	\varphi(t) = \begin{cases}
		I(m,m;t/\tau_0) & t\in [0,\tau_0) \\
		1 & t\in[\tau_0,\tau].
	\end{cases}
\end{equation}

\begin{figure}
	\centering
	\includegraphics[height=8cm]{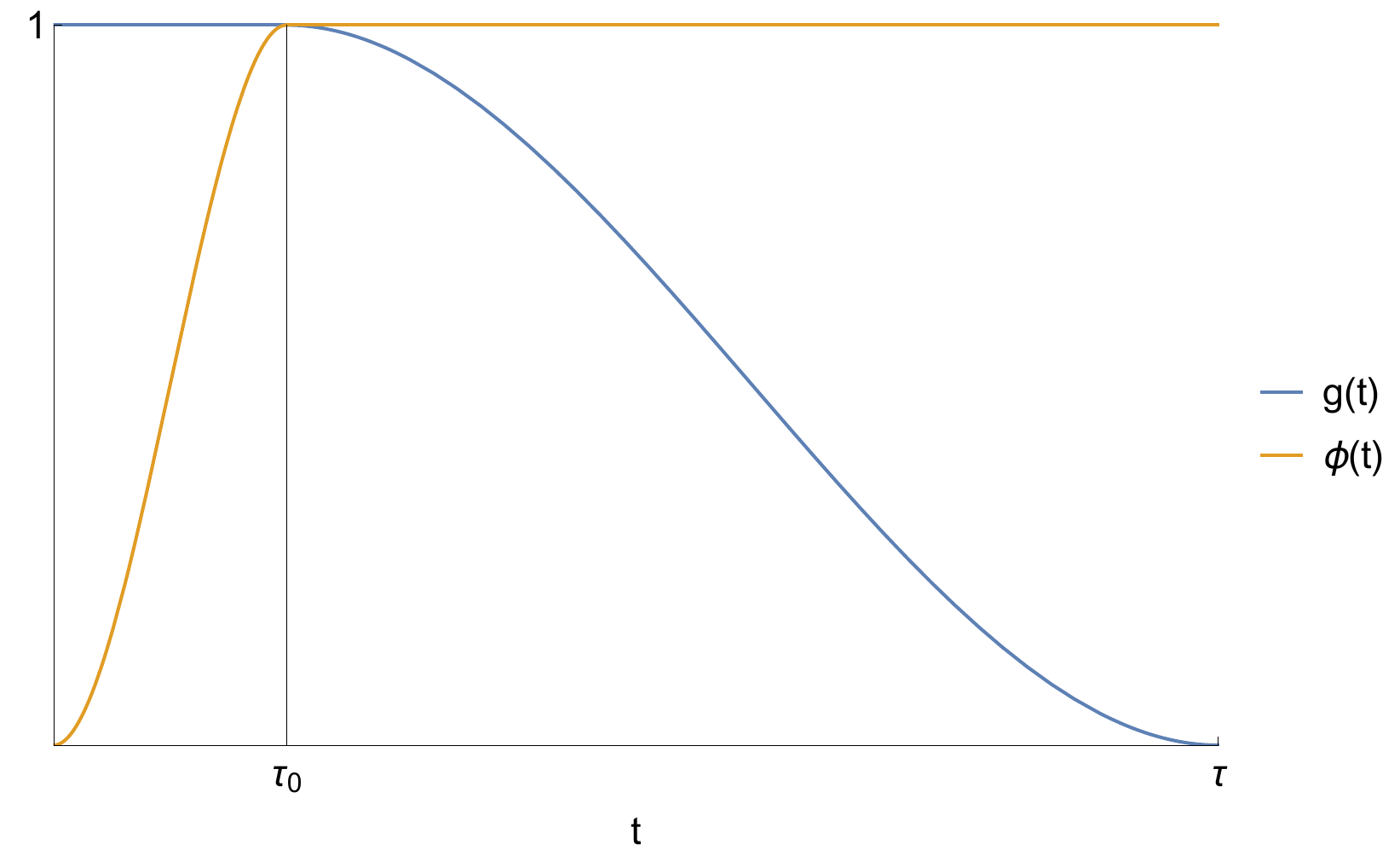}
	\caption{Plot of the functions $g(t)$ and  $\varphi(t)$ defined in Eq.~\eqref{eq:scen1fdef} and \eqref{eq:scen1phidef} respectively.}
	\label{fig:gphi}
\end{figure}

The advantage of these functions for our purposes is that their $L^2$-norms and those of their first derivatives are known in closed form~\cite{Fewster:2019bjg,connor2021integrals}. To be specific, 
the $L^2$-norms on the interval $[0,\tau]$ are 
\begin{equation}\label{eq:gnorms}
	\|\chi_{[\tau_0,\tau]}g\|^2 =(\tau-\tau_0)A_m, \qquad
	\|g'\|^2 = \frac{B_m}{\tau-\tau_0}, \qquad 
\end{equation}
(see the proof of Lem.~4.1 in~\cite{Fewster:2019bjg})
where $\chi_{[a,b]}$ is the characteristic function of $[a,b]$ and the constants $A_m$ and $B_m$ are 
\begin{equation}\label{eq:AB}
	A_m= \frac{1}{2} - \frac{(2m)!^4}{4(4m)!m!^4},\quad B_m=\frac{(2m-2)!^2(2m-1)!^2}{(4m-3)!(m-1)!^4}
\end{equation} 
for $m\in\mathbb{N}$. The first few relevant values are tabulated in Table~\ref{tab:const}. For the function $\varphi$ we have $\|\chi_{[0,\tau_0]}\varphi\|^2= \tau_0A_m$. It is then easily seen that for the product $f = g \varphi$ we have
\begin{equation}\label{eq:phigests}
	\|f\|^2 = A_m\tau \,.
\end{equation} 

\begin{table}
	\begin{center}\begin{tabular}{|c|c|c|c|c|} \hline
			$m$ & $1$ & $2$ & $3$ & $4$  \\ \hline \hline
			$A_m$ & $1/3$ & $13/35$ & $181/462$ & $521/1287$ \\
			$B_m$ & $1$ & $6/5$ & $10/7$ & $700/429$ \\
			\hline
		\end{tabular}\caption{The first few values of the constants $A_m$ and $B_m$.}
		\label{tab:const}
	\end{center}
\end{table}

As the function $f=g\varphi$ belongs to $W_0^m(0,\tau)$, and $\gamma$ satisfies the hypotheses of 
Theorem~\ref{the:Riccfin} due to assumption~(i) of Theorem~\ref{the:sing}, it may be shown that
\be
\label{eqn:rhohest}
\int_0^\tau \rho(t) f(t)^2 dt \leq Q_0 A_m \tau+Q_m H_m(T_0,\tau_0,\tau) \,,
\ee
where $H_m(T_0,\tau_0,\tau)=|||f|||^2$ is given by Eq.~\eqref{eqn:hestim} and the details of the derivation are discussed in Appendix~\ref{ssec:Hm}. Now it is straightforward to prove the following Lemma.

\begin{lemma}\label{lem:Jest} 
	Under the hypotheses of Theorem~\ref{the:sing}, if $\gamma:[0,\tau]\to\mathcal{M}$ is a unit-speed geodesic emanating normally from $S$ then the functional $J$ defined by~\eqref{eqn:Jdef} satisfies
	the estimate 
	\bea
	\label{eqn:Jest}
	J[g] \leq \nu_*(T_0,\tau_0,\tau,\rho_0) :=Q_m H_m(T_0,\tau_0,\tau)+Q_0 A_m \tau+(n-1) \frac{B_m}{\tau-\tau_0}+\rho_0 \tau_0 (1-A_m) \,.
	\eea 	
	for the function $g$ given by~\eqref{eq:gdef}. 
	Consequently, if $-K|_{\gamma(0)}\ge \nu_*$ then $\gamma$ contains a focal point to $S$.
\end{lemma}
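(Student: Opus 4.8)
The plan is to bound $J[g]$ by treating separately the two terms in~\eqref{eqn:Jdef}. The gradient term needs no work: by the closed-form norm~\eqref{eq:gnorms}, $(n-1)\|g'\|^2=(n-1)B_m/(\tau-\tau_0)$, which is exactly the corresponding summand of $\nu_*$. Hence the whole task reduces to bounding the curvature term $\int_0^\tau g(t)^2\rho(t)\,dt$, where $\rho(t)=R_{\mu\nu}\dot{\gamma}^\mu\dot{\gamma}^\nu|_{\gamma(t)}$ is smooth on $[0,\tau]$ because $\omega$ is Hadamard.

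The key observation is that $g$ itself does not lie in $W_0^m(0,\tau)$---it fails to vanish at $t=0$, where $g(0)=1$---so the QEI-derived bound of Theorem~\ref{the:Riccfin} cannot be applied to $g$ directly. This is precisely the role of the auxiliary profile $\varphi$ of~\eqref{eq:scen1phidef}: since $I(m,m;x)$ has a zero of order $m$ at $x=0$, the product $f=g\varphi$ vanishes together with its first $m-1$ derivatives at both endpoints, so $f\in W_0^m(0,\tau)$ and Theorem~\ref{the:Riccfin} does apply to it. Comparing $g$ and $f$ on $[0,\tau_0)$, where $g\equiv1$ and hence $g^2-f^2=1-\varphi^2$, and on $[\tau_0,\tau]$, where $\varphi\equiv1$ and hence $g^2=f^2$, yields the pointwise identity $g^2=f^2+(1-\varphi^2)\chi_{[0,\tau_0]}$, and therefore
\[
  \int_0^\tau g^2\rho \;=\; \int_0^\tau f^2\rho \;+\; \int_0^{\tau_0}(1-\varphi^2)\rho\,.
\]

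I would bound the first term by~\eqref{eqn:rhohest}, i.e.\ Theorem~\ref{the:Riccfin} applied to $f=g\varphi$---hypothesis~(ii) supplies the constraint $\tau_0<\tau T_0/(T_0+2\tau)$ that the theorem requires---together with $\|f\|^2=A_m\tau$ from~\eqref{eq:phigests}, giving $\int_0^\tau f^2\rho\le Q_0A_m\tau+Q_mH_m(T_0,\tau_0,\tau)$. For the correction term I would use that $0\le\varphi\le1$, so $1-\varphi^2\ge0$, while hypothesis~(ii) gives $\rho(t)\le\rho_0\le0$ on $[0,\tau_0]$; hence $(1-\varphi^2)\rho\le(1-\varphi^2)\rho_0$ pointwise there, and integrating with $\|\chi_{[0,\tau_0]}\varphi\|^2=\tau_0A_m$ gives $\int_0^{\tau_0}(1-\varphi^2)\rho\le\rho_0\tau_0(1-A_m)$. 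Adding the two bounds to the gradient term reproduces exactly $\nu_*(T_0,\tau_0,\tau,\rho_0)$, establishing~\eqref{eqn:Jest}. The focal-point conclusion is then automatic: $g$ is piecewise smooth (in fact $C^{m-1}([0,\tau])$) with $g(0)=1$ and $g(\tau)=0$, so if $-K|_{\gamma(0)}\ge\nu_*\ge J[g]$ then Prop.~2.2 of Ref.~\cite{Fewster:2019bjg} forces a focal point of $\gamma$ to $S$.

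I expect no serious obstacle in this Lemma itself---the analytic heavy lifting lives in Theorem~\ref{the:Riccfin} and in the evaluation of $H_m$ deferred to the appendix. The one point that requires care is the sign bookkeeping of the correction term: it is essential that $\rho_0\le0$ (hypothesis~(ii)), both so that replacing $\rho$ by the larger constant $\rho_0$ on $[0,\tau_0]$ is compatible with the nonnegative weight $1-\varphi^2$ and still yields an \emph{upper} bound on $\int_0^{\tau_0}(1-\varphi^2)\rho$, and so that $\rho_0\tau_0(1-A_m)\le0$, i.e.\ the correction genuinely sharpens the estimate rather than weakening it. Everything else is assembly of closed-form norms already recorded in~\eqref{eq:gnorms}, \eqref{eq:phigests} and Table~\ref{tab:const}.
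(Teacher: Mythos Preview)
Your proof is correct and follows essentially the same route as the paper: the identity $g^2=f^2+(1-\varphi^2)$ splits the curvature term, Theorem~\ref{the:Riccfin} (via~\eqref{eqn:rhohest}) bounds the $f^2$-part, assumption~(ii) handles the $(1-\varphi^2)$-part on $[0,\tau_0]$, and the gradient term is read off from~\eqref{eq:gnorms}. One small remark: the step $(1-\varphi^2)\rho\le(1-\varphi^2)\rho_0$ only needs $\rho\le\rho_0$ and $1-\varphi^2\ge0$, not the sign of $\rho_0$; the condition $\rho_0\le0$ is relevant only for your second observation that the correction term sharpens rather than weakens the bound.
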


	\begin{proof}
		Using $g^2=(\varphi g)^2+(1-\varphi^2)g^2=f^2+(1-\varphi^2)$ and Eq.~\eqref{eqn:rhohest} we have
		\bea
		\int_0^\tau dt g(t)^2 \rho(t)&=&\int_0^\tau dt \rho(t) f(t)^2+\int_0^\tau dt (1-\varphi^2)\rho(t) \nonumber \\
		&\leq& Q_m H_m(T_0,\tau_0,\tau)+Q_0A_m \tau+\int_0^{\tau_0} dt (1-\varphi^2) \rho(t)  \nonumber \\
		&\leq& Q_m H_m(T_0,\tau_0,\tau)+Q_0 A_m \tau+\rho_0 \tau_0(1-A_m) \,,
		\eea
		where we have also used the estimates for the Sobolev norms of $g$, $\varphi$ and $f$
		and assumption~(ii) of Theorem~\ref{the:sing}.
		Using 
	the estimate of $\|\dot{g}\|^2$ from Eq.~\eqref{eq:gnorms} the proof of~\eqref{eqn:Jest} is complete
	and the remaining statement follows by \cite[Prop.~2.2]{Fewster:2019bjg} as discussed above. 
	\end{proof}

The proof of our main result is now straightforward.
\begin{proof}[Proof of Theorem~\ref{the:sing}]
	Using assumptions~(i--iii) and Lemma~\ref{lem:Jest}, all unit-speed timelike geodesics of length $\tau$ emanating normally from $S$ contain a focal point to $S$. It follows that every unit-speed geodesic emanating normally from $S$ that does not contain focal points has length strictly less than $\tau$. Theorem~\ref{the:sing} now follows by standard reasoning, as summarised e.g., in Prop.~2.4(b) of~\cite{Fewster:2019bjg}.
\end{proof}

\section{A quantitative application using cosmological parameters}
\label{sec:cosmology}

To conclude, we will show that the theorems given above can be used quantitatively. We will use parameters drawn from the cosmology of our universe and show that the time-reversed version of our theorem predicts a past singularity, or singularity-like behaviour, on reasonable cosmological timescales. It should be borne in mind that this prediction relates to our semiclassical Einstein equation coupled to a single scalar quantum field. 	
Thus the aim is not to make predictions about our Universe but rather to show that our singularity theorem
can make physically plausible quantitative statements about a toy cosmology. This was also our approach
in~\cite{Brown:2018hym,Fewster:2019bjg} to which we will refer for some details.

The relevant information about our own universe is drawn from the $\Lambda$CDM cosmology model fitted with data from the PLANCK experiment~\cite{aghanim2018planck}. In particular, the ratios of dust and dark energy densities to the critical density are taken respectively as $\Omega_m=0.31$ and $\Omega_\Lambda=0.69$ at the present time. According to the model fitted with these parameters, the SEC was last satisfied when the Universe's age was approximately $2.41 \times 10^{17} \text{s}$ (at redshift parameter $z_*=0.642$), after which dark energy became dominant. The Hubble parameter at that time was $H_*=3.14 \times  10^{-18} \textrm{ s}^{-1}$ (for details on this calculation see \cite{Fewster:2019bjg}). Also at that time, the four-dimensional Ricci scalar had magnitude $|R_*|=5.7 \times 10^{-35} {\text s}^2$. What we draw from this is that (this model of) our own universe contains a Cauchy surface with extrinsic curvature $K=H_*$ towards the future, and that the SEC was obeyed for some period beforehand. 

Now let us examine the quantitative predictions of the time-reversed version of Theorem~\ref{the:sing}, in the four-dimensional case $m=2$. The idea is the Cauchy surface $S$ in the theorem should correspond to the $z_*=0.642$ hypersurface in the $\Lambda$CDM model. Several parameters are required: $M$ (the mass of the field, expressed as the inverse reduced Compton length), $\phi_{\text{max}}$ (the maximum magnitude of the scalar field), timescales $T_0$ and $\tau$ that determine the function $T_t$, together with parameters $\rho_0$ and $\tau_0$ relating to the period where SEC is assumed. Recall that our assumption is that the SEC should hold in the form $R_{\mu\nu}\dot{\gamma}^\mu\dot{\gamma}^\nu\le \rho_0\le 0$ for at least a period of time $\tau_0$ just before the surface $S$. We set $\rho_0=0$, so our assumption becomes simply the statement that SEC held for at least a period $\tau_0$. For $M$, we use values corresponding to three elementary particles: the pion, proton\footnote{The proton is not a scalar but QEIs for fermionic fields have been developed (e.g. \cite{Fewster:2001js,Dawson:2006py}) and they have a very similar form to the scalar field ones. The proton is of special interest in cosmology as the majority of baryonic matter is in the form of hydrogen.} and Higgs. In each case $\fmax$ is determined using the square root of the Wick square in a Minkowski spacetime KMS state of a temperature that is $1\%$ of the reduced Compton temperature $T_{\text{Compton}}=c\hbar M/k$ of the particle concerned, where $k$ is Boltzmann's constant. The temperature $T_{\text{Compton}}$ defines a scale beyond which the model cannot be trusted and, following~\cite{Brown:2018hym,Fewster:2019bjg} provides a heuristic basis for the value  
\begin{equation}
	\label{eqn:phimax}
	\fmax^2 \sim  10^{-2}\frac{c^4}{G_N} (M\ell_{\textrm{Pl}})^2  K_{1}(100) \,,
\end{equation}
where $K_\nu$ is a modified Bessel function of the second kind and $\ell_{\textrm{Pl}}$ is the Planck length.
We should note that Eq.~\eqref{eqn:phimax} is sensitive to the choice of temperature $10^{-2} T_{\text{Compton}}$ and that $\fmax$ would change significantly if this is raised or reduced. 
For the moment, we leave the timescales $T_0$, $\tau$ and $\tau_0$ as free parameters. Let us recall
that $T_t$ is the timescale over which the Minkowski QEI provides a valid estimate on the EED in the spacetime, and in particular that $T_0$ sets that timescale at the surface $S$. For Theorem~\ref{the:sing} we must assume $\tau_0< t_2 = \tau T_0/(T_0+2\tau)$.

The time-reversed version of Theorem~\ref{the:sing} asserts that, provided the past-directed unit-speed geodesics emanating normally from $S$ are $T$-regular, with $T_t=T_0(1-t/\tau)$, past geodesic incompleteness occurs within timescale $\tau$ if the extrinsic curvature of $S$ (to the future) exceeds $\min\{3/\tau_0, \allowbreak \nu_*(T_0,\tau_0,\tau,\rho_0) \}$, where 
$\nu_*$ was defined in Eq.~\eqref{eqn:Jest}. As $\rho_0=0$, 
\be
\label{eqn:nustarcosm}
\nu_*(T_0,\tau_0,\tau,0) = Q_2 H_2(T_0,\tau_0,\tau)+Q_0 A_2 \tau+  \frac{3B_2}{\tau-\tau_0} \,,
\ee
where $A_2= 13/35$, $B_2=6/5$ and, restoring dimensions, the constants $Q_2$ and $Q_0$ are given by
\be
Q_2=\hbar \frac{G_N }{2 \pi c^5} =4.60 \times 10^{-88} {\text s}^2 \,, \qquad Q_0= \frac{4\pi G_N M^2\phi_{\max}^2}{c^2} \,.
\ee
The values of $Q_0$ for the three elementary particles based on the $\fmax$ calculation described are given in Table~\ref{tab:particles}, while the function $H_2(T_0, \tau_0, \tau)$ is evaluated in Appendix~\ref{ssec:Hm} as  
\bea
H_2(T_0,\tau_0,\tau)&=& 2\left[  \sqrt{\frac{12}{\tau_0^3}}+\sqrt{\frac{12}{(\tau-\tau_0)^3}} +2H_1 \left(\sqrt{\frac{3}{10\tau_0t_2^2}}+c\sqrt{\frac{12}{(\tau-\tau_0)^3}} \right)\right.\nonumber \\
&&\left. +H_2\left( \sqrt{\frac{13\tau_0}{560t_2^4}}+c^2 \sqrt{\frac{13}{3(\tau-\tau_0)^3}} \right) \right]^2 \,,
\eea
where the numerical coefficients are $H_1=10.87$ and $H_2=234.6$
and $c = 2\tau^2/((2\tau-T_0)T_0)$

\begin{table}
	\begin{center}\begin{tabular}{|c|c|c|c|c|c|c|} \hline
			Particle & Mass in kg &  $T_{\text{Compton}}$ in $K$ & $Q_0$ in $s^{-2}$ & $\tau$ in $s$ & $\nu_*$ in $s^{-1}$ & min$T_0$ in $s$  \\ \hline \hline
			Pion & $2\times 10^{-28}$  & $1.56 \times 10^{12}$ & $2.38 \times 10^{-40}$  & $2.02 \times 10^{20}$ & $3.57 \times 10^{-20}$ & $1.05 \times 10^{-10}$ \\
			Proton & $1.67 \times 10^{-27}$  & $1.089 \times 10^{13}$ & $5.60 \times 10^{-37}$  & $4.16 \times 10^{18}$ & $1.73 \times 10^{-18}$ & $1.51 \times 10^{11}$ \\ 
			Higgs & $2.23 \times 10^{-25}$  & $1.45 \times 10^{15}$ & $1.78 \times 10^{-28}$& $2.33 \times 10^{14}$ & $3.09 \times 10^{-14}$ & $1.14 \times 10^{-13}$  \\ \hline
		\end{tabular}\caption{The temperature scale and $Q_0$ for three elementary particles with different masses, the pion, the Higgs and the proton.}
		\label{tab:particles}
	\end{center}
\end{table} 
 
The required extrinsic curvature depends on the three remaining independent parameters $T_0$, $\tau_0$ and $\tau$. We will consider situations in which $\tau$, the timescale within which the singularity occurs, is much greater than $T_0$, the timescale on which the Minkowski QEI is valid in the curved spacetime near $S$. This implies that $\tau_0<t_2\approx T_0/2\ll \tau$ and $c\approx \tau/T_0\gg 1$. 
If the extrinsic curvature at $S$ exceeds $3/\tau_0$, then Hawking's theorem already implies that past timelike geodesic incompleteness occurs within time $\tau_0$ of $S$, i.e., in the 
region where SEC holds. This rather trivial result is avoided by restricting to the situation where $\nu_*  \ll 3/\tau_0$. Our theorem now predicts that past timelike geodesic completeness fails within time $\tau$ provided $K>\nu_*$.  

For convenience, let us assume that the first term of Eq.~\eqref{eqn:nustarcosm} is much smaller than the other two (as actually happens for a broad range of parameters). Then the required extrinsic curvature is given by a combination of the last two terms,
\begin{equation}
	\nu_*(T_0,\tau_0,\tau,0) \approx  Q_0 A_2 \tau+  \frac{3B_2}{\tau}
\end{equation}
different for each particle, but roughly independent of $T_0$ and $\tau_0$ provided we remain in the regime mentioned above. 
The value of $\tau$ that minimizes $\nu_*$ and the corresponding value of $\nu_*$ are 
\begin{equation} \label{eq:taunustar}
\tau=\sqrt{3B_2/(Q_0 A_2)} = 3.11 Q_0^{-1/2}\,, \qquad \nu_*=\sqrt{12A_2B_2 Q_0} = 2.31Q_0^{1/2}\,.
\end{equation}  
Evaluating these expressions for the pion gives $\tau=2.02 \times 10^{20} {\text s}$ and $\nu_*=3.57 \times 10^{-20} {\text s}^{-1}$, while for the proton we have $\tau=4.16 \times 10^{18} {\text s}$ and $\nu_*=1.73 \times 10^{-18} {\text s}^{-1}$. Both values of $\nu_*$
are smaller than the value $H_*=3.14 \times  10^{-18} \textrm{ s}^{-1}$ drawn from $\Lambda$CDM cosmology. Therefore, in either of these models, we could conclude the presence of a past singularity within timescale $\tau$ if the extrinsic curvature took the value $H_*$ on $S$. Note that the timescales are larger, by 
$1$ order of magnitude (proton) or $3$ orders of magnitude (pion) than the age of our universe at $z_*=0.642$. Particularly for the proton, these results show that the theorem provides physically plausible quantitative estimates.  In the Higgs case, we have $\tau=2.33 \times 10^{14} {\text s}$ and $\nu_*=3.09 \times 10^{-14} {\text s}^{-1}$, very much larger than $H_*$. Similar behaviour was observed in an earlier hybrid model \cite{Fewster:2019bjg}.  

The parameters $\tau_0$ and $T_0$ are still free, though must be chosen to be much less than $\tau$ and to ensure that all the subsidiary assumptions made above still hold. Let us examine the condition that the first term in Eq.~\eqref{eqn:nustarcosm} be much smaller than the others, which sum to the approximate value of $\nu_*$ given in Eq.~\eqref{eq:taunustar}
in our situation of interest. Taking $\tau_0\approx T_0/2$ the assumption $T_0\ll \tau$ implies that 
$Q_2 H_2(T_0,\tau_0,\tau)\approx 4.77 \times 10^5 Q_2 \tau T_0^{-4}$. Using the above values, the first term  of \eqref{eqn:nustarcosm} is at least two orders of magnitude smaller than the sum of the other two provided $T_0\ge 1.31\times 10^{-20} Q_0^{-1/4}$, giving values of $1.06 \times 10^{-10}\text{s}$ for the pion, $1.51 \times 10^{-11}\text{s}$ for the proton and $1.14 \times 10^{-13}\text{s}$ for the Higgs. The largest of these values corresponds to a length scale of a few centimeters. By comparison, in the $\Lambda$CDM model at the $z_*=0.642$ surface, the four-dimensional Ricci scalar $|R_*|=5.7 \times 10^{-35} {\text s}^2$ gives a crude indication that curvature scales are on the order of $|R_*|^{-1/2}= 1.3 \times 10^{17}{\text s}$, while the Hubble parameter sets a timescale $H_*^{-1}= 3.2\times 10^{17}{\text s}$ of a similar order. As the minimum values of $T_0$ are many orders of magnitude below those scales it seems very reasonable to accept the validity of the Minkowski QEI. The $\tau$, $\nu_*$ and $T_0$ values for each particle are summarized in Table~\ref{tab:particles}. 

If, instead of taking $\tau_0\approx T_0/2$, we allow $\tau_0$ to be much less than $T_0$, then the contribution to $H_2$ equal to $24 Q_2/ \tau_0^3$ might become dominant. Requiring this term to remain at least two orders of magnitude smaller than $4.77 \times 10^5 Q_2 \tau T_0^{-4}$ we find a minimum value of $\tau_0$ for which our approximations remain valid. This reasoning gives a minimum value of $\tau_0$ for the pion $1.46 \times 10^{-21}\text{s}$, for the proton $4.00 \times 10^{-22}s$ and for the Higgs $1.53 \times 10^{-23} \text{s}$. In other words, the SEC need be assumed to hold only for a fleeting instant (this does not exclude it holding for a much longer period!). Furthermore, in all cases considered we have $3/\tau_0\gtrsim 6/T_0 \gg 6/\tau =0.84 \nu_*$ so we are well outside the regime (by several orders of magnitude) in which Hawking's result would predict geodesic incompleteness.

Finally, we note that there is still a lot of freedom that could be exploited. For example,
if we multiply the value of $\tau$ from that given in Eq.~\eqref{eq:taunustar} by a factor of $a$, the value of $\nu_*$ is multiplied by a factor of $a/2+1/(2a)$. In the case of the proton we may use this freedom to lower the singularity timescale for a proton to about $5.3$ times the age of our universe at the $z_*=0.642$ surface, while the required extrinsic curvature remains less than $H_*$.

\section{Conclusions}
\label{sec:conclusions}

In this work we presented the first proof of a semiclassical singularity theorem based directly on a QEI satisfied by a quantum field. In previous works we have derived, on the one hand, a singularity theorem with a weakened energy condition \cite{Fewster:2019bjg} and, on the other, a quantum strong energy inequality (QSEI) \cite{Fewster:2018pey}. Here we have connected those pieces, showing how a suitable energy condition for a singularity theorem can be derived from such a QEI. The main challenge we overcame was the need to approximate the curved spacetime QEI by its Minkowski counterpart, which necessitated a 
partition of unity argument into pieces of vanishingly short duration towards the end of
the geodesic segments under consideration, thus allowing for a possible blow-up in the curvature scales. 

In particular, we showed that a solution to the semiclassical Einstein equation with a massive scalar field cannot be future timelike complete if four conditions hold:
(i) an assumption essentially requiring that the local temperature of the scalar field remains below a specified threshold, (ii) all the unit speed timelike geodesics of
length $\tau$ emanating normally from a spacelike hypersurface $S$ are $T$-regular for a function $T_t=T_0(1-t/\tau)$ that decreases linearly to the future
(i.e., the Minkowski QEI provides a reliable estimate of the curved spacetime QEI on timescales given by $T_t$ at time $t$),
(iii)  the SEC holds for a short period just after $S$ and (iv) the extrinsic curvature of $S$ should obey a specific inequality. Failure of any of conditions (i)--(ii) might be regarded as indicating singular behaviour for many practical purposes, provided the temperature threshold is high enough -- for instance, failure of (ii) indicates that the 
curvature scales shrink to zero faster than linearly along at least one geodesic leaving $S$. 

To gain some insight into the potential utility of our theorem, it was applied using 
data drawn from $\Lambda$CDM cosmology and the PLANCK experiment~\cite{aghanim2018planck}, with the mass of the scalar field taken equal to 
that of the pion, the proton and the Higgs particle in turn. For the proton, we found physically reasonable parameter values so that an extrinsic curvature of a magnitude found at redshift $z=0.642$ in the $\Lambda$CDM model would guarantee a singularity within an order of magnitude of the actual age of the universe at that time. The SEC need only be assumed for a fleeting instant -- far shorter than would be needed for Hawking's result to predict a singularity. It should be emphasised that our result holds across \emph{all} solutions to the semiclassical Einstein equation satisfying the assumptions, so it is not at all surprising that the timescale to singularity is longer than in the physical cosmology. Nonetheless, it is gratifying that the simple model produces such physically plausible results.

A drawback in our result is the assumption of $T$-regularity, which works around
the fact that curved spacetime QEI in its current form requires the specification of 
a reference state. In fact, we believe it likely that our approach ends up overestimating the 
threshold on the extrinsic curvature needed to guarantee geodesic incompleteness. 
A more satisfactory approach would require a suitable `absolute' version of the QEI from~\cite{Fewster:2018pey} that does not require a reference state. Even then it would be necessary to estimate various terms in the Hadamard parametrix. Such a computation has been done in the case of the quantum weak energy inequality \cite{Kontou:2014tha} based on an absolute QEI from~\cite{FewsterSmith:2008}. The QSEI presents additional challenges as the bound is state dependent even in the minimally coupled scalar field case and is the topic of ongoing work. 

The singularity theorem presented here concerns timelike geodesic incompleteness. In Ref.~\cite{Fewster:2019bjg} we derived singularity theorems with weakened energy conditions for both timelike and null geodesics. However, QEIs along null geodesics in four-dimensional spacetime do not exist, as was shown by an explicit counterexample in~\cite{Fewster:2002ne}. The counterexample can be circumvented if a momentum cut-off is imposed, and it was recently shown in Ref.~\cite{Freivogel:2020hiz} that a conjectured null QEI with a UV momentum cutoff \cite{Freivogel:2018gxj} is sufficient to prove a null geodesic singularity theorem using methods from \cite{Fewster:2019bjg}. However, the QEI involved remains to be proven and any curvature corrections estimated. 

Finally, it is important to note that the semiclassical Einstein equation is a highly nontrivial system and relatively little is known about its solution space except where a high level of symmetry is imposed -- see e.g. \cite{Pinamonti:2013wya,Gottschalk:2018kqt,Sanders:2020osl}. It would be of interest to examine the behavior of solutions obtained in those works and the range of initial conditions that predict a primordial singularity.

\section*{Acknowledgements}

The authors would like to thank Rafael Sorkin for a useful conversation regarding the use of the semiclassical Einstein equation. This work is part of a project that has received funding from the European Union's Horizon 2020 research and innovation programme under the Marie Sk\l odowska-Curie grant agreement No. 744037 ``QuEST''. E-AK is supported by the ERC Consolidator Grant QUANTIVIOL.

	\appendix
\section{Details concerning the partition of unity}
\label{app:det}

In this Appendix we provide the details of the partition of unit construction summarized in Sec.~\ref{sec:part}.

\subsection{Construction of the partition of unity}\label{ssec:POUconstruction}

We aim to construct functions $\phi_n$ obeying
Eqs.~\eqref{eq:partition} and~\eqref{eq:intervalnests}, starting from 
a given function $t\mapsto T_t$. A first task is to show that a sequence $t_n$ of the required type can be found.
\begin{lemma}
	Suppose that $t\mapsto T_t$ is strictly positive, nonincreasing and continuous on $[0,\tau)$
	where $0<\tau\le\infty$, with $T_t\to 0$ as $t\to\tau$ if $\tau<\infty$. Then there is a strictly increasing sequence $t_n$  ($n\in\mathbb{N}_0$) in $[0,\tau)$ satisfying~\eqref{eq:intervalnests}, with $t_0=-\tfrac{1}{2}T_0$, $t_1=0$, $t_{n+1}-t_n=\tfrac{1}{2}T_{t_{n+1}}$ for $n\ge 0$ and $t_n\to \tau$ as $n\to\infty$. 	
	Furthermore, defining $n:[0,\tau)\to \mathbb{N}$ so that $t'\in [t_{n(t')},t_{n(t')+1})$ for all $t'\in [0,\tau)$, 
	the following estimates hold for all $t\in[0,\tau)$:
	\be\label{eq:tntest}
	t_{n(t)+1}\le  t + \tfrac{1}{2}T_t\,,\qquad T_{t_{n(t)+1}}\ge T_{t+T_t/2}\,.
	\ee
\end{lemma}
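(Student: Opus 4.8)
The plan is to build the sequence $(t_n)$ recursively by means of the intermediate value theorem, and then to read the two estimates almost directly off the defining recursion $t_{n+1}-t_n=\tfrac12 T_{t_{n+1}}$.

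\emph{Construction.} First I would set $t_0=-\tfrac12 T_0$ and $t_1=0$; then $t_1-t_0=\tfrac12 T_0=\tfrac12 T_{t_1}$ and $t_0<t_1$ (since $T_0>0$), so the $n=0$ case of the recursion holds by inspection. For $n\ge 1$, given $t_n\in[0,\tau)$, I consider $h(s)=t_n+\tfrac12 T_s-s$ on $[t_n,\tau)$. Since $T$ is continuous and nonincreasing, $h$ is continuous and strictly decreasing; moreover $h(t_n)=\tfrac12 T_{t_n}>0$, while $T_s\le T_{t_n}$ for $s\ge t_n$ gives $h(s)\le t_n+\tfrac12 T_{t_n}-s$, which tends to $t_n+\tfrac12 T_{t_n}-\tau<0$ as $s\to\tau^-$ (read as $-\infty$ when $\tau=\infty$). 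Hence there is a unique $t_{n+1}\in(t_n,\tau)$ with $h(t_{n+1})=0$, i.e.\ $t_{n+1}-t_n=\tfrac12 T_{t_{n+1}}$; in particular $t_{n+1}>t_n$, so $(t_n)_{n\in\mathbb{N}_0}$ is strictly increasing and lies in $[0,\tau)$ for $n\ge 1$.

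\emph{Limit and \eqref{eq:intervalnests}.} Being strictly increasing, $(t_n)$ has a limit $L\in(0,\infty]$, and I claim $L=\tau$. If not, then $L<\tau$ and in particular $L<\infty$, so by continuity $T_{t_{n+1}}\to T_L>0$ and therefore $t_{n+1}-t_n=\tfrac12 T_{t_{n+1}}\to\tfrac12 T_L>0$, contradicting $t_{n+1}-t_n\to 0$; hence $L=\tau$. For \eqref{eq:intervalnests}, the recursion with index $n-1$ gives $t_{n-1}=t_n-\tfrac12 T_{t_n}$, so $[t_{n-1},t_n]=[t_n-\tfrac12 T_{t_n},t_n]\subset[t_n-\tfrac12 T_{t_n},t_n+\tfrac12 T_{t_n}]$. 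Since $t_1=0$ and $t_n\to\tau$, the intervals $[t_n,t_{n+1})$ with $n\ge 1$ partition $[0,\tau)$, so $n(\cdot)$ is well defined with values in $\mathbb{N}$.

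\emph{The estimates.} Fixing $t\in[0,\tau)$ and writing $n=n(t)$, so that $t_n\le t<t_{n+1}$, monotonicity of $T$ together with $t<t_{n+1}$ gives $T_{t_{n+1}}\le T_t$, hence $t_{n+1}=t_n+\tfrac12 T_{t_{n+1}}\le t+\tfrac12 T_t$, which is the first estimate; the second, $T_{t_{n+1}}\ge T_{t+T_t/2}$, follows by applying the nonincreasing $T$ to $t_{n+1}\le t+\tfrac12 T_t$. I expect no serious obstacle: the only point needing a little care is the well-posedness of the recursion and the convergence $t_n\to\tau$ in the case $\tau=\infty$, both handled by the crude bound $T_s\le T_{t_n}$ for $s\ge t_n$; the remainder is bookkeeping around the fixed-point equation for $t_{n+1}$.
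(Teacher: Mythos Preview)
Your approach is essentially the same as the paper's: build the recursion via the intermediate value theorem, use monotonicity of $T$ to force $t_n\to\tau$, and read the estimates off the recursion combined with the fact that $T$ is nonincreasing.

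There is one small slip in the IVT step for finite $\tau$. You assert that the upper bound $t_n+\tfrac12 T_{t_n}-s$ tends to $t_n+\tfrac12 T_{t_n}-\tau<0$, but nothing in the hypotheses prevents $T_{t_n}\ge 2(\tau-t_n)$; for instance if $T_0\ge 2\tau$ then already at $n=1$ one has $t_1+\tfrac12 T_{t_1}=\tfrac12 T_0\ge\tau$, and your crude bound never becomes negative. The fix is immediate: by hypothesis $T_s\to 0$ as $s\to\tau^-$, so $h(s)=t_n+\tfrac12 T_s-s\to t_n-\tau<0$ directly, and IVT applies without the detour through the upper bound. (The paper avoids this by extending $T_t=0$ for $t\ge\tau$ and working with the equivalent equation $t-\tfrac12 T_t=t_n$ on $[t_n,\tau]$.) With that correction, your argument is complete and matches the paper's.
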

\begin{proof}
	We first show that there is a sequence $t_n\in [0,\tau)$ obeying $t_1=0$ and $t_{n+1}-t_n=\tfrac{1}{2}T_{t_{n+1}}$ for $n\ge 1$ (clearly this holds also for $n=0$ if we set $t_0=-\tfrac{1}{2}T_0$). If $\tau<\infty$ then set $T_t=0$ for $t\ge \tau$, thus extending $T_t$ as a continuous function. Suppose inductively that the first $n$ terms of the sequence can be constructed for $n\ge 1$ and all lie in $[0,\tau)$, noting that this holds for $n=1$. 
	The function $h(t)= t-\tfrac{1}{2}T_t$ is continuous on $[0,\infty)$, strictly increasing, obeys $h(t_n)<t_n$ and $h(t)\ge t- \tfrac{1}{2}T_{t_n}$ for $t\ge t_n$. Thus $h(\tau)>t_n$ and $h(t_n+ \tfrac{1}{2}T_{t_n})\ge t_n$. 
	By the intermediate value theorem, the equation $h(t)=t_n$ has a solution $t\in (t_n,t_n+ \tfrac{1}{2}T_{t_n}]\cap (0,\tau)$, which is in fact the unique solution in $(t_n,\infty)$.
	Therefore $t_n$ is well-defined as a strictly increasing sequence $t_n$ in $[0,\tau)$.   
	
	As $T_t$ is nonincreasing, the sequence $t_n$ also has the property $t_{n+1}-t_{n}= \tfrac{1}{2}T_{t_{n+1}}\le \tfrac{1}{2}T_{t_{n}} =t_{n}-t_{n-1}$. It follows that $[t_{n-1},t_{n+1}]\subset [t_n-T_{t_n}/2,t_n+T_{t_n}/2]$.
	
	If $\tau<\infty$, the sequence is increasing and bounded above, so $t_n\to t_*=\sup_n t_n\le \tau$, whereupon $T_{t_n}\to 0$ as $n\to\infty$ and hence $T_{t_*}=0$ by continuity, showing that $t_*\ge \tau$. Thus $t_n\to \tau$. Similarly, if $\tau=\infty$ but the sequence $t_n$ is bounded, then $t_n\to t_*<\infty$ and we again deduce $T_{t_*}=0$, contradicting the assumption that $T_t$ is strictly positive on $(0,\tau)$.
	
	To establish~\eqref{eq:tntest} one notes that $t_{n(t)}\le t<t_{n(t)+1}$ by definition, so  
	\be
	t_{n(t)+1}\le t +t_{n(t)+1}-t_{n(t)}= t +
	\tfrac{1}{2}T_{t_{n(t)+1}} \le t + \tfrac{1}{2}T_t\,,
	\ee
	as $T_t$ is non-increasing, from which we also deduce $T_{t_{n(t)+1}}\ge T_{t+T_t/2}$.
\end{proof}
We remark that one also has the very simple estimate $t_{n(t)+1}\le t+t_2$ for all $t$, on noting that $t_{n(t)+1}\le t+\tfrac{1}{2}T_{t_2}=t+t_2$
for $t\ge t_2$, and trivially, that $t_{n(t)+1}=t_2\le t+t_2$ for $t<t_2$.

The functions $\phi_n$ ($n\in\mathbb{N}$) corresponding to the sequence
$t_n$ are defined by Eq.~\eqref{eqn:phi}, where $F(x)=\sin(\theta(x))$ and $G(x)=\cos(\theta(x))$, with $\theta$ given by Eq.~\eqref{eqn:theta}. We must check that they are indeed smooth functions satisfying the conditions~\eqref{eq:partition} and~\eqref{eq:intervalnests}. 

Clearly, each $\phi_n$ takes values in $[0,1]$ with $\phi_n(t_n)=1$.
Additionally, $\phi_n$ is clearly smooth everywhere except perhaps $t=t_n$; but
as all derivatives of $\theta(x)$ vanish as $x\to 0$ or $x\to \tfrac{1}{2}$, it is easily seen that each $\phi_n$ is smooth at $t_n$ with all its derivatives vanishing there. 

As $\phi_n(t)$ vanishes identically for $t<t_n-T_{t_n}/2=t_{n-1}$
and for $t>t_n+T_{t_{n+1}}/2=t_{n+1}$, we have $\supp \phi_n\subset [t_{n-1},t_{n+1}]$.	Thus, on the interval $[t_n,t_{n+1}]$ only $\phi_n$ and $\phi_{n+1}$ can be nonzero, and one has
\[
\phi_n(t)^2 + \phi_{n+1}(t)^2 = \cos^2 [\theta((t-t_n)/T_{t_{n+1}})] + 
\sin^2[\theta((t-{t_{n+1}})/T_{t_{n+1}} +\tfrac{1}{2}) ] = 1
\]
because $t_{n+1}/T_{t_{n+1}}-1/2 = t_n/T_{t_{n+1}}$. Summarising, we have constructed
a sequence $\phi_n$ ($n\ge 1$) of smooth compactly supported functions taking values in $[0,1]$, with
$\supp \phi_n\subset [t_{n-1},t_{n+1}]$ and satisfying~\eqref{eq:partition} on
$\cup_{n=1}^\infty  [t_{n},t_{n+1}]=[0,\tau)$.

\subsection{Specific examples of the sequence $t_n$}\label{ssec:exampletn}

\paragraph{Finite interval} For a first example, define $T_t=T_0(1-t/\tau)$ on $[0,\tau)$ for $0<\tau<\infty$. The equation $t_{n+1}-t_n=\tfrac{1}{2}T_t$ can be rearranged to 
\begin{equation}
	t_{n+1}=\frac{t_n+T_0/2}{1+T_0/(2\tau)}\,,
\end{equation}
which is solved uniquely, subject to $t_1=0$, by 
\begin{equation}\label{eq:tnfinite}
	t_n= \tau \left(1 - \left(\frac{2\tau}{T_0+2\tau}\right)^{n-1}\right)
\end{equation}
for all $n\in \mathbb{N}_0$. In particular, $t_2=\tau T_0/(T_0+2\tau)$ and $t_0=-\tfrac{1}{2}T_0$.
Defining $n:[0,\tau)\to \mathbb{N}$ so that	$t'\in [t_{n(t')},t_{n(t')+1})$, one finds
\begin{equation}
	n(t') = 1 + \left\lfloor \frac{\log(1-t'/\tau)}{\log(2\tau/(T_0+2\tau))}\right\rfloor\,,
\end{equation}
from which $t_{n(t)+1}$ may be calculated. Here $\lfloor x\rfloor$ is the largest integer no greater than $x$. For convenience, one may use the estimate~\eqref{eq:tntest}  to give
\begin{equation}
	t_{n(t)+1} \le \left(1-\frac{T_0}{2\tau}\right) t + \frac{T_0}{2} \in [T_0/2, \tau)
\end{equation}	
and therefore
\begin{equation}\label{eq:Ttntfinite}
	T_{t_{n(t)+1}} \ge \frac{(\tau-t)(2\tau-T_0)T_0}{2\tau^2} = \frac{\tau-t}{c}, \qquad c = \frac{2\tau^2}{(2\tau-T_0)T_0}
\end{equation}
holds for all $t\in[0,\tau)$
For $0\le t<t_2$, one has $n(t)=1$ and therefore $T_{t_{n(t)}+1}=T_{t_2}=2t_2$ holds for $t\in [0,t_2)$.

\paragraph{Infinite interval}
As an example where $\tau=\infty$, consider $T_t=T_0 e^{-\lambda t}$ for some constant $\lambda>0$. Then the equation
\be
t_{n}-t_{n-1}=\tfrac{1}{2}T_{t_{n}}
\ee
may be written as a recurrence relation in terms of Lambert's $W$-function~\cite[\S4.13]{DLMF} by 
\be	\label{eq:exptn}
t_{n+1} = t_n + \lambda^{-1} W(\tfrac{1}{2}\lambda T_0e^{-\lambda t_n}),\qquad t_1=0.
\ee
In particular, $t_2 = W(\tfrac{1}{2}\lambda T_0)$ obeys $1< e^{\lambda t_2}< 1+\tfrac{1}{2}\lambda T_0$.

To obtain asymptotic approximations to $t_n$, it is convenient to write $t_n=s_n/\lambda$ where the sequence $s_n$ obeys
\be\label{eq:srec}
s_{n+1}-s_n= \mu e^{-s_{n+1}},\qquad s_1 = 0
\ee
and $\mu= \tfrac{1}{2}\lambda T_0$. Observe that the sequence $S_n=\log(\mu n+e^{-\mu})$ satisfies
\begin{equation}
	S_{n+1}-S_n = \mu\int_{n}^{n+1} \frac{dr}{\mu r + e^{-\mu}} > \frac{\mu}{\mu(n+1)+e^{-\mu}}=
	\mu e^{-S_{n+1}}, \qquad S_1 = 
	\log (\mu+ e^{-\mu})>0.
\end{equation}
Thus $S_1>s_1$ and, because 
\begin{equation}
	S_{n+1}- s_{n+1} > S_n-s_n +\mu \left(e^{-S_{n+1}}-e^{-s_{n+1}}\right),
\end{equation}
a simple induction argument shows that $S_{n}>s_n$ for all integer $n\ge 1$, i.e., one has
\begin{equation}
	s_n < \log (\mu n + e^{-\mu})\qquad n\ge 1.
\end{equation}
Now the relation~\eqref{eq:srec} implies
\begin{equation}
	s_n = \mu \sum_{r=2}^n e^{-s_r} 
\end{equation}
so the upper bound on $s_n$ gives
\begin{equation}
	s_n > \sum_{r=2}^n \frac{\mu}{\mu n+e^{-\mu}} = \psi(n+1+\mu^{-1}e^{-\mu}) - \psi(2+\mu^{-1}e^{-\mu}) =
	\log(n) - \psi(2+\mu^{-1}e^{-\mu}) +O(1/n)
\end{equation}
where $\psi(z)=\Gamma'(z)/\Gamma(z)$ is the digamma function.

Thus $s_n = \log(\mu n) +O(1)$ as $n\to\infty$ and the upper and lower bounds on $s_n$, together with~\eqref{eq:srec}, give
\begin{equation}
	\frac{\mu}{\mu n + e^{-\mu}}< s_{n+1}-s_n <
	\frac{\mu \exp\left(\psi(2+\mu^{-1}e^{-\mu})\right)}{\exp\left(\psi(n+2+\mu^{-1}e^{-\mu})  \right)} = \frac{\mu \exp\left(\psi(2+\mu^{-1}e^{-\mu})\right)}{\mu (n+1/2)+e^{-\mu}}\left(1+O(n^{-2})\right)
\end{equation}
i.e., $s_{n+1}-s_n=O(1/n)$. That is, our original sequence $t_n$ satisfies
\begin{equation}
	t_n = \lambda^{-1}\log(\tfrac{1}{2}\lambda T_0 n) + O(1), \qquad t_{n+1}-t_n = O(1/n)
\end{equation}
as $n\to\infty$.

\paragraph{Reverse engineering}
To conclude this section, we note that other example sequences can be obtained by reverse engineering the function $T_t$ from a required sequence $t_n$. 
Suppose a strictly increasing sequence $t_n$ ($n\in\mathbb{N}_0$) is given, with $t_1=0$
and $t_n\to\infty$ as $n\to\infty$. Choose any  
homeomorphism $k:[0,\infty)\to [0,2\tau/T_0+1)$ that obeys $k(0)=0$ and $k(1)=1$, and 
provides an interpolation of the sequence $t_n$ according to
\be\label{eq:tnfromk}
t_n=\tfrac{1}{2}(k(n)-1)T_0.
\ee 
Then it is easily seen, noting that
$k^{-1} (2t_n/T_0+1)=n$, that this sequence corresponds to 
\be
T_t =2t + T_0 - T_0 k\left(k^{-1}(1+2t/T_0)-1\right)
\ee
which, if $k$ is differentiable, obeys
$T'_0 = 2(1-k'(0)/k'(1))$.

For instance, the sequence~\eqref{eq:tnfinite} arises in this way via the homeomorphism $k:[0,\infty)\to [0,2\tau/T_0+1)$ given by 
\be
k(x)= 1 + \frac{2\tau}{T_0} \left(1 - \left(\frac{2\tau}{T_0+2\tau}\right)^{x-1}\right)\,,
\ee
while the sequence 
\be
\label{eqn:harmonictn}
t_n = \frac{T_0}{2}(\psi(n+1)+\gamma-1) = \begin{cases} \tfrac{1}{2}T_0 \left(-1+\sum_{r=1}^n r^{-1}\right) & n\ge 1 \\ -\tfrac{1}{2}T_0 & n=0
\end{cases},
\ee
where $\gamma$ is the Euler-Mascheroni constant,  
arises (via the homeomorphism of $[0,\infty)$ given by $k(x)=\psi(x+1)+\gamma$) from 
\be 
T_t =2t + T_0 \left(1-\gamma- \psi\left(\psi^{-1}(1-\gamma+2t/T_0)-1\right)\right).
\ee 

\subsection{Weighted Sobolev inequalities}\label{ssec:Sobolev}

Returning to a general partition of unity to start with, we can proceed to estimate the sum appearing on the LHS of Eq.~\eqref{eq:EEDpou}.  

Exchanging the integral with the sum we can write, for any $f\in C_0^\infty(0,\tau)$,
\begin{align}
	\label{eqn:sumfirst}
	\sum_{n=1}^\infty\|(f \phi_n)^{(m)}\|^2 &=\int_{0}^\infty \sum_{n=1}^\infty\left|\sum_{i=0}^m \binom{m}{i} f^{(m-i)} \phi_n^{(i)} \right|^2 dt \nonumber \\
	&  =\int_{0}^\infty \sum_{i=0}^m \sum_{j=0}^{m} \binom{m}{i}\binom{m}{j} \overline{h_i f^{(m-i)}}h_j f^{(m-j)} P_{ij}\, dt,\qquad P_{ij}(t)=\sum_{n=1}^\infty \frac{\phi_n^{(i)}(t) \phi_n^{(j)}(t)}{h_i(t) h_j(t)}  ,
\end{align}
where the nonnegative functions $h_j(t)$, to be chosen below, have dimensions of $[\text{time}]^{-j}$ to make the components $P_{ij}$ dimensionless. For each $t\ge 0$, there are at most two nonzero terms in the sum defining $P_{ij}$, namely the terms $r=n(t)$ and $r=n(t)+1$, where $n(t')$ is defined
so that $t'\in [t_{n(t')},t_{n(t')+1})$. (For example, if $t_n$ is related to an interpolating function $k$ as in Eq.~\eqref{eq:tnfromk}, then
\be
n(t) = \lfloor k^{-1}(2t/\tau_c+1)\rfloor\,,
\ee 
where $\lfloor x\rfloor$ denotes the greatest integer less than or equal to $x$.) Therefore the sum defining $P_{ij}$ is finite. Using the elementary inequality that $x^\dagger P x \le \|P\|_{\text{max}} (\sum_i |x_i|)^2$ where $\|P\|_{\text{max}}=\max_{i,j}|P_{ij}|$, the integral in \eqref{eqn:sumfirst} can be bounded as
\be
\label{eqn:sumn}
\sum_{n=1}^\infty\|(f \phi_n)^{(m)}\|^2 \leq \int_{0}^\tau \left(\sum_{i=0}^m  \binom{m}{i} h_i(t)  \left|f^{(m-i)}(t)\right|\right)^2  \|P(t)\|_{\text{max}} \,dt\,.
\ee
There is considerable freedom to choose the functions $h_i$. A convenient choice is to set 
\begin{equation}
	h_i(t) = \frac{H_i}{T_{t_{n(t)+1}}}, \qquad H_i = \max \{\| F^{(i)}\|_\infty, \|G^{(i)}\|_\infty\}.
\end{equation}
To see why, note that
\be
\phi_n^{(i)}(t)=\begin{cases}
	s_n^i F^{(i)}((t-t_n)/T_{t_n}+1/2)  & t< t_n \,,\\
	s_{n+1}^{i} G^{(i)}((t-t_n)/T_{t_{n+1}})  & t \geq t_n  \,,
\end{cases}
\ee
where $s_n=1/T_{t_n}$. At each $t\ge 0$, $\phi_{n(t)}^{(i)}(t)=s^i_{n(t)+1}G((t-t_n)/T_{t_{n+1}})$, while
$\phi_{n(t)+1}^{(i)}(t)=s^i_{n(t)+1}F((t-t_{n+1})/T_{t_{n+1}})$.
Thus, taking $h_i(t)= T_0^{-i} s^i_{n(t)+1} = T_{t_{n(t)+1}}^{-i}$, one has $|P_{ij}(t)| \le 2$ and hence the estimates
\be \label{eqn:sumn2}
\sum_{n=1}^\infty \|(f \phi_n)^{(m)}\|^2 \leq 2  \left\|\sum_{j=0}^m   \frac{H_j}{T_{t_{n(t)+1}}^{j}} \binom{m}{j} |f^{(m-j)}|  \right\|^2 
\leq 2  \left[\sum_{j=0}^m   H_j \binom{m}{j} \left\|\frac{f^{(m-j)}}{T_{t_{n(t)+1}}^{j}} \right\| \right]^2
\ee  
for all $f\in C_0^\infty(0,\tau)$, 
where we have used the triangle inequality in the last step.

To proceed, let us consider the specific partition of unity induced by
$T_t=T_0(1-t/\tau)$ on the interval $[0,\tau)$, now assuming $\tau<\infty$.
Suppose that $0<\tau_0<t_2$ and 
let $\chi$ be the characteristic function of $[0,\tau_0]$. 
Writing $f^{(m-j)}=\chi f^{(m-j)}+ (1-\chi)f^{(m-j)}$ and again
using the triangle inequality, 
\begin{equation}
	\left\|\frac{f^{(m-j)}}{T_{t_{n(t)+1}}^{j}} \right\| \le  \left\|\frac{\chi f^{(m-j)}}{T_{t_{n(t)+1}}^{j}} \right\| + \left\|\frac{(1-\chi)f^{(m-j)}}{T_{t_{n(t)+1}}^{j}} \right\| \,,
\end{equation}
also using the fact that $T_{t_{n(t)+1}}\ge 2t_2$ for $t\in [0,\tau_0]$ and otherwise using the estimate $T_{t_{n(t)+1}}\ge (\tau-t)/c$, (see~\eqref{eq:Ttntfinite} and~\eqref{eq:Ttntfinite}) one has
\begin{equation} 
	\label{eqn:sumgen}
	\sum_{n=1}^\infty \|(f \phi_n)^{(m)}\|^2 \leq 2   \left[\sum_{j=0}^m 
	H_j \binom{m}{j} \left( \frac{\|\chi f^{(m-j)}\|}{(2t_2)^j} + c^j\left\|\frac{(1-\chi)f^{(m-j)}}{(\tau-\cdot)^j}\right\|\right)
	\right]^2  
\end{equation}
for all $f\in C_0^\infty(0,\tau)$.  This is precisely the inequality~\eqref{eqn:fphiestim} needed to complete the proof of Theorem~\ref{the:Riccfin}. 

\subsection{A higher order Hardy inequality}\label{ssec:Hardy}

Let $U$ and $V$ be measurable real-valued functions on $[0,\infty)$ and suppose that $B>0$ defined by 
\[
B^2= \sup_{x>0}\left(\int_x^\infty |U(t)|^2\,dt\right)\left(\int_0^x \frac{dt}{|V(t)|^{2}}\right)
\]	
is finite. Where necessary, it is understood that the convention $0\cdot\infty=0$ is in force. Then there exists $C>0$ such that the weighted Hardy inequality 
\begin{equation}
	\int_0^\infty \left|U(x)\int_0^x f(t)dt \right|^2\,dx \le C^2 \int_0^\infty |V(x)f(x)|^2\,dx.
\end{equation}
holds for measurable, locally integrable $f$ and the least value of the constant $C$ for which this is true obeys $B\le C\le 2B$. In particular, the finiteness of the right-hand side implies that the left-hand side is finite. 
See \cite{Muckenhoupt:1972} for an elementary proof. Considerably more general results are known.

We apply this inequality in the case $U(x)=x^{-n}H(1-x)$, $V(x)= x^{-(n-1)}H(1-x)$ where $H$ is the Heaviside function, for which
\[
B^2 = \sup_{x>0}\frac{H(1-x)(x^{-(2n-1)}-1)}{(2n-1)}\frac{x^{2n-1}}{(2n-1)}=\frac{1}{(2n-1)^2}
\]
and therefore
\begin{equation}\label{eq:ex2}
	\int_0^1 \left|\frac{1}{x^n}\int_0^x f(t)dt \right|^2\,dx \le \frac{4}{(2n-1)^2} \int_0^1 \frac{|f(x)|^2}{x^{2(n-1)}}\,dx.
\end{equation}

Inequality~\eqref{eq:ex2} can be rephrased as the statement
\begin{equation}
\left\| x\mapsto \frac{h(x)}{x^n}\right\|^2 \le \frac{4}{(2n-1)^2} \left\|x\mapsto \frac{h'(x)}{x^{2(n-1)}}\right\| ^2
\end{equation}
for all $h\in AC([0,1])$ obeying $h(0)=0$. Iterating, one finds
\begin{equation} 
	\left\| x\mapsto \frac{h(x)}{x^n}\right\|^2 \le \frac{4^n}{((2n-1)!!)^2}   \|h^{(n)}\|^2 
\end{equation}
for all $h\in AC([0,1])$ such that each $h^{(j)}\in AC([0,1])$ with $h^{(j)}(0)=0$ for
$0\le j\le n-1$. Inequality~\eqref{eq:higherHardy} is now immediate by an elementary change of integration variable.

See e.g.,~\cite{KufnerKulievPersson:2012} for more general (though less explicit) higher order Hardy inequalities.

\subsection{Computation of $H_m(T_0,\tau_0,\tau)$}\label{ssec:Hm}

In Section~\ref{sec:sing} we need to compute $H_m(T_0,\tau_0,\tau)=|||f|||^2$, for the function $f$ built from incomplete beta functions
\begin{equation}\label{eq:scen1fdef}
	f(t) = \begin{cases}
		I(m,m;t/\tau_0) & t\in [0,\tau_0) \\
		I(m,m;(\tau-t)/(\tau-\tau_0)) & t\in[\tau_0,\tau]\,.
	\end{cases}
\end{equation} 
 Elementary manipulations give 
\be
\|\chi f^{(m-j)}\|^2 = \frac{A_{m,m-j}}{\tau_0^{2(m-j)-1}}\,, \qquad
A_{m,k}:= \int_0^1 I^{(k)}(m,m;x)^2\,dx
\ee
and
\be
\left\|\frac{(1-\chi)f^{(m-j)}}{(\tau-\cdot)^j}\right\|^2 = \frac{C_{m,j}}{(\tau-\tau_0)^{2m-1}}\,,\qquad C_{m,j} = \int_0^1 x^{-2j}I^{(m-j)}(m,m;x)^2\,dx
\ee
Here, conventions have been chosen so that $A_{m,0}=A_m$ and $C_{m,0}=C_m$, in the notation used in \cite{Fewster:2019bjg} and Section~\ref{sec:sing}.
Overall, this gives
\begin{equation}
	\label{eqn:hestim}
	H_m(T_0,\tau_0,\tau)
	:=2\left[\sum_{j=0}^m 
	H_j \binom{m}{j} \left( \sqrt{\frac{A_{m,m-j}}{\tau_0^{2(m-j)-1}(2t_2)^{2j}}} + c^j
	\sqrt{\frac{C_{m,j}}{(\tau-\tau_0)^{2m-1}}}\right)
	\right]^2 
\end{equation} 
for the function $f$ in~\eqref{eq:scen1fdef}. Here, the dependence on $T_0$ enters via $c$ and $t_2$.

\begin{table}
	\begin{center}\begin{tabular}{|c|c|c|c|} \hline
			$j$ & $0$ & $1$ & $2$   \\ \hline \hline
			$A_{2,j}$ & $13/35$ & $6/5$ & $12$   \\
			$C_{2,j}$ & $12$ & $12$ & $13/3$  \\ \hline
		\end{tabular}\caption{The values of the constants $A_{m,j}$ and $C_{m,j}$ for $m=2$.}
		\label{tab:moreconst}
	\end{center}
\end{table}

To make this bound quantitative, one needs to evaluate the constants $A_{m,j}$ and $C_{m,j}$.
Our main interest will be in the case $m=2$, for which the relevant values are tabulated in Table~\ref{tab:moreconst}. Using these values together with Eq.~\eqref{eqn:hestim} gives
\bea
H_2(T_0,\tau_0,\tau)&=& 2\left[ H_0\left( \sqrt{\frac{12}{\tau_0^3}}+\sqrt{\frac{12}{(\tau-\tau_0)^3}}\right)+2H_1 \left(\sqrt{\frac{3}{10\tau_0t_2^2}}+c\sqrt{\frac{12}{(\tau-\tau_0)^3}} \right)\right.\nonumber \\
&&\left. +H_2\left( \sqrt{\frac{13\tau_0}{560t_2^4}}+c^2 \sqrt{\frac{13}{3(\tau-\tau_0)^3}} \right) \right]^2 \,.
\eea
From Eq.~\eqref{eqn:Hmax} we have $H_0=1$ and 
\be
H_1=\sup{\theta'(x)} \,, \qquad H_2=\sup{\theta'(x)^2}+\sup{\theta''(x)} \,.
\ee
Now $\theta'(x)=e^{-1/x} e^{-1/(1/2-x)}$ on $[0,1/2]$ and vanishes otherwise. Differentiating,
$\theta''(x)=\theta'(x) (1-4x)/(1-2x)^2$, so it is easily seen that the maximum of $\theta'$ occurs at $x=1/4$, giving $H_1=10.87$. Differentiating further, 
\begin{equation}
	\theta'''(x) = \theta'(x){\frac {48\,{x}^{4}-48\,{x}^{3}+32\,{x}^{2}-10\,x+1}{{x}^{4} \left( -1
			+2\,x \right) ^{4}}}
\end{equation} 
and one may determine that the maximum of $\theta''$ occurs at $x= {\tfrac{1}{4}}-{\tfrac {\sqrt {6\,\sqrt {13}-21}}{12}}$ with value $116.5$ (see  Fig.~\ref{fig:theta}). Consequently $H_2=234.6$.   
\begin{figure}
	\centering
	\includegraphics[height=4.5cm]{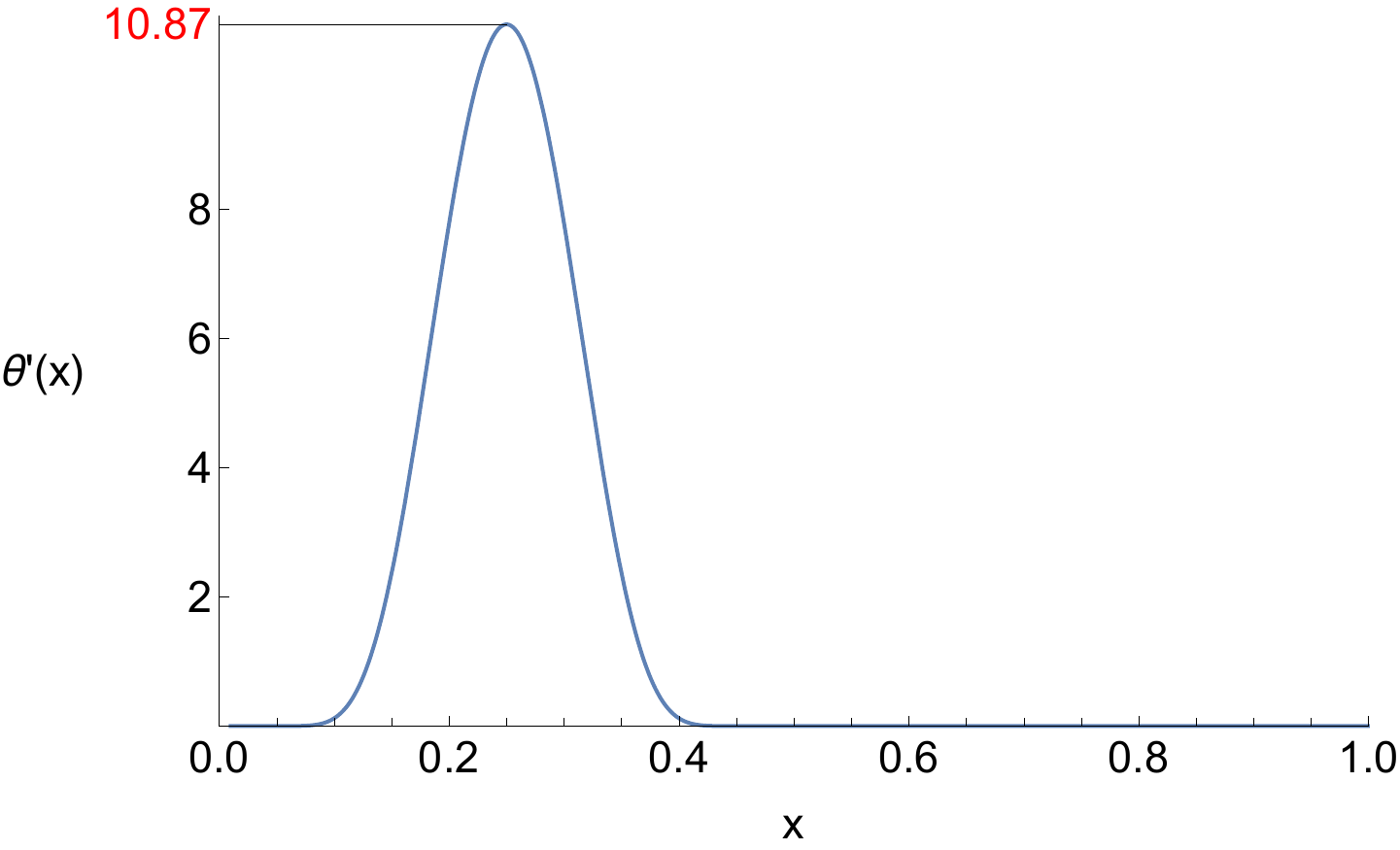}
	\includegraphics[height=4.5cm]{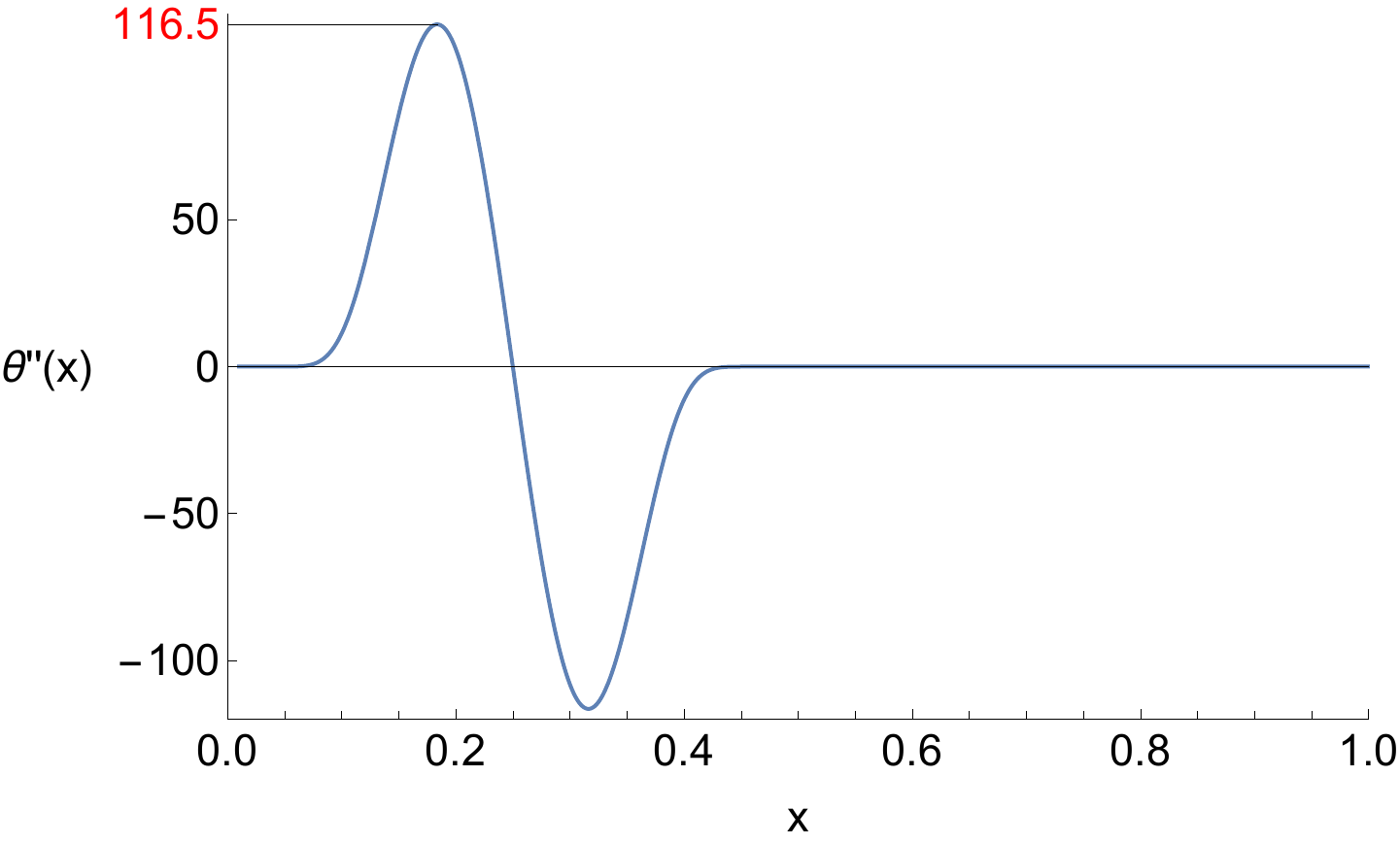}
	\caption{Plots of the first (left) and second (right) derivatives of function $\theta$ (see Eq.~\eqref{eqn:theta}). Their maximum values are indicated in red.}
	\label{fig:theta}
\end{figure}

More generally, closed forms for $A_{m,0}=A_0$ and $C_{m,0}=C_0$ are known for all $m\in\mathbb{N}$. 
For general $k\ge 1$, we have
\begin{align}
	A_{m,k}&= \frac{1}{B(m,m)^2}\int_0^1 \left(\frac{d^{k-1}}{dx^{k-1}} x^{m-1}(1-x)^{m-1}\right)^2\,dx \nonumber\\
	&=\frac{\Gamma(2m)^2}{\Gamma(m)^2(m-k)!^2}\int_0^1 \left(\sum_{r=0}^{m-1} 
	\frac{(1-m)_r \, (m)_r}{(m-k+1)_r}\frac{x^{m-k+r}}{r!}
	\right)^2\,dx\nonumber\\
	&=\frac{\Gamma(2m)^2}{\Gamma(m)^2(m-k)!^2(2(m-k)+1)}
	\sum_{r=0}^{\infty}\sum_{s=0}^{\infty}\frac{(1-m)_r \, (m)_r}{(m-k+1)_r}\frac{(1-m)_s \, (m)_s}{(m-k+1)_s}\frac{(2(m-k)+1)_{r+s}}{(2(m-k+1))_{r+s}}\frac{1}{r!s!}\nonumber\\
	&=\frac{\Gamma(2m)^2}{\Gamma(m)^2 (m-k)!^2 (2(m-k)+1)}
	\KdF{1:2;2}{1:1;1}{2(m-k)+1}{m,1-m}{m,1-m}{2(m-k+1)}{m-k+1}{m-k+1}{1,1},
\end{align}
(note that the series actually terminate after $m$ terms) where the general Kamp\'e de F\'eriet function~\cite{KampedeFeriet1921a,KampedeFeriet1921b} in $2$ variables is given by
\begin{equation}
	\label{eq:KdFdefn} 
	\KdF{A:C;F}{B:D;G}{(a)}{(c)}{(f)}{(b)}{(d)}{(g)}{x,y}  = 
	\sum_{m=0}^\infty\sum_{n=0}^\infty\frac{((a))_{m+n} ((c))_m ((f))_n}{((b))_{m+n}((d))_m ((g))_n }\, \frac{x^m y^n}{m!n!}
	\,,
\end{equation}
in which $(a)$, $(c)$ etc are sequences of length $A$, $C$ etc, and the notation $((c))_m=\prod_{i=1}^C (c_i)_m$ denotes the product of Pochhammer symbols. A similar calculation yields
\begin{equation}
	C_{m,j}= \frac{\Gamma(2m)^2}{j!^2\Gamma(m)^2}
	\KdF{1:2;2}{1:1;1}{1}{m,1-m}{m,1-m}{2}{j+1}{j+1}{1,1}.
\end{equation}
In certain cases, these expressions can be evaluated in terms of more familiar special functions. For example, direct calculation gives
\begin{align}
	\int_0^1 x^{\lambda-1}I'(\mu,\mu';x)^2\,dx &=\frac{1}{B(\mu,\mu')^2}\int_0^1 x^{\lambda-1} x^{2\mu-2}(1-x)^{2\mu'-2}\,dx \\
	&=\frac{B(\lambda+2\mu-2,2\mu'-1)}{B(\mu,\mu')^2} 
\end{align}	
so with $\mu'=\mu$ and $\lambda=3-2\mu$,
\begin{equation}
	\int_0^1 x^{-2(\mu-1)}  I'(\mu,\mu';x)^2\,dx =\frac{B(1,2\mu-1)}{B(\mu,\mu)^2} = \frac{1}{(2\mu-1)B(\mu,\mu)^2}.
\end{equation}
giving $C_{m,m-1}$ in the case $\mu=m$.  

The coefficients $C_{m,m}$ can also be obtained. In the notation of~\cite{connor2021integrals}, one has
	\begin{equation}
		\int_0^1 x^{\lambda-1}I(\mu,\mu',x)^2 dx = B(\lambda,1)\mathcal{B}(\lambda,1,\mu,\mu',\mu,\mu')
	\end{equation}
	and one may read off from that reference a formula for the right-hand side 
	in terms of a generalized hypergeometric function:
	\begin{equation}
		\int_0^1 x^{\lambda-1}I(\mu,\mu',x)^2 dx = \frac{1}{\lambda}\left(1-\frac{2B(\lambda+2\mu,\mu')}{\mu B(\mu,\mu')^2} \, \pFq{3}{2}{1-\mu',\mu,\lambda+2\mu}{1+\mu,\lambda+2\mu+\mu'}{1}\right),
	\end{equation}
	which was proved for nonnegative parameters $\lambda,\mu\,\mu'$.
	Setting $\mu'=\mu$ and analytically continuing the above formula to $\lambda=1-2\mu$ one finds
	\begin{equation}
		\int_0^1 x^{-2\mu}I(\mu,\mu,x)^2 dx = \frac{1}{2\mu-1}\left(\frac{2}{\mu^2 B(\mu,\mu)^2}  \pFq{3}{2}{1,\mu,1-\mu}{1+\mu,1+\mu}{1}-1\right)\,,
	\end{equation}
	yielding a closed form for $C_{m,m}$ on setting $\mu=m$.

\bibliographystyle{utphys}
\bibliography{sing}

\end{document}